\newtheorem{theorem}{Theorem}  [section]
\def\BibTeX{{\rm B\kern-.05em{\sc i\kern-.025em b}\kern-.08em
    T\kern-.1667em\lower.7ex\hbox{E}\kern-.125emX}}
\begin{document}

\title{Proof of Federated Learning: A Novel Energy-recycling Consensus Algorithm\\
}
\author{
\IEEEauthorblockN{Xidi Qu\IEEEauthorrefmark{1}, Shengling Wang(Corresponding Author)\IEEEauthorrefmark{1}, Qin Hu \IEEEauthorrefmark{2}, Xiuzhen Cheng\IEEEauthorrefmark{3}}
\IEEEauthorblockA{
\IEEEauthorrefmark{1}\textit{School of Artificial Intelligence, Beijing Normal University, Beijing, China}\\
Email:{sindychanson@mail.bnu.edu.cn, wangshengling@bnu.edu.cn} \\
\IEEEauthorrefmark{2}\textit{Department of Computer and Information Science, Indiana University-Purdue University Indianapolis, Indiana, USA} \\
Email:qinhu@iu.edu \\
\IEEEauthorrefmark{3}\textit{School of Computer Science and Technology, Shandong University, Qingdao, China} \\
Email:xzcheng@sdu.edu.cn
}}
\maketitle

\begin{abstract}
Proof of work (PoW), the most popular consensus mechanism for Blockchain, requires ridiculously large amounts of energy but without any useful outcome beyond determining accounting rights among miners. To tackle the drawback of PoW,  we propose a novel energy-recycling consensus algorithm, namely  proof of federated learning (PoFL), where the energy originally wasted to solve difficult but meaningless puzzles in PoW is reinvested to federated learning. Federated learning and pooled-ming, a trend of PoW,  have a natural fit in terms of organization structure. However, the separation between the data usufruct and ownership in Blockchain lead to data privacy leakage in model training and verification, deviating from the original intention of federal learning. To address the challenge, a reverse game-based data trading mechanism and a privacy-preserving model verification mechanism are proposed. The former  can  guard against training data leakage while the latter verifies the accuracy of a trained model with privacy preservation of the task requester's test data as well as the pool's submitted model. To the best of our knowledge, our paper is the first work to employ federal learning as the proof of work for Blockchain. Extensive simulations based on synthetic and real-world data demonstrate the effectiveness and efficiency of our proposed mechanisms.
\end{abstract}

\begin{IEEEkeywords}
Blockchain, consensus mechanism, federated learning, reverse game
\end{IEEEkeywords}

\section{Introduction}
Blockchain, disrupting the current centralized models, is heralded as
the next paradigm innovation in  digital networks, which opens a door to uncharted cyberspace with ever-increasing security, verifiability and transparency concerns. The performance of Blockchain  heavily relies   on the adopted consensus mechanisms in terms of
efficiency, consistency, robustness and scalability.  The aim of consensus mechanisms is  orchestrating the global state machine so as to agree on the order of deterministic events and
screen out invalid events. Undisputedly, the most popular consensus mechanism  is PoW, which is adopted by two mainstream Blockchain systems, namely Bitcoin and Ethereum.

PoW determines  accounting rights and rewards through the
competition among nodes ({\it miners}) to solve a hard cryptographic puzzle by brute-forcing, which is called {\it mining}, an extremely  computation-hungry process. It is reported that the total electricity consumption of Bitcoin is comparable to that of Austria annually; the electricity that a single Bitcoin  transaction expends is equal to 22.32 U.S. households powered for one day \cite{Digiconomist19}. The energy-wasting way of PoW deviates from the sustainable and environment-friendly trend for current technology development, thus diluting
its value and hindering its further application.

To tackle the drawback of PoW, researchers proposed solutions from two different perspectives: {\it energy-conservation} and {\it energy-recycling}.
Proof of stake (PoS) \cite{King12} and
voting-based consensus algorithms \cite{Nguyen18} are typically  energy-conservation approaches. They economize on energy  by cutting down the mining difficulty of rich stakeholders or their delegates. Non-democracy is an obvious side effect  of these approaches since they have a bias toward wealthy peers. Furthermore, their mining process, even with reduced difficulty,  is still considered as waste on useless computation.

Energy-recycling consensus algorithms address the energy-wasting  issue of PoW from a different angle. They recycle the energy which is originally employed  to solve cryptographic puzzles for useful tasks. For instance, the mining energy can be repurposed for finding long prime chains \cite{King13}, matrix computation \cite{Shoker17}, image segmentation \cite{Li19} and deep learning \cite{Chenli19}. The idea of energy-recycling consensus algorithms, i.e., turning the meaningless proof of work into practical tasks for completing the consensus of Blockchain, undoubtedly deepens the integration of Blockchain and other fields, expanding the application scope of Blockchain.

In this paper, we propose a novel energy-recycling consensus algorithm: {\it proof of federated learning} (PoFL), where the energy originally wasted to solve difficult but meaningless puzzles
 in PoW is reinvested to federated  learning. Federated learning \cite{Bonawitz19} is a distributed machine learning approach, with the idea of bringing code to data rather than the reverse direction.  In federated  learning, a high-quality model maintained by a central server can be learned through aggregating locally-computed updates, which are  sent by  a loose federation of participating clients. Since the local training dataset of each  client will not be sent to the central server, privacy and security risks are significantly reduced.

Besides PoFL  can inherit the advantage of   energy-recycling consensus algorithms, we propose it also because PoW and federated learning have a natural fit in terms of organization structure. Due to the huge difficulty for individual mining, {\it pooled-mining } becomes a trend of PoW, where miners join pools, gathering their computational power, to figure out the cryptographic solution  and then the pool manager of each pool allocates rewards proportionally to each miner's contribution. In other words, both pooled-mining and federated learning has a {\it clustering}   structure. Thus, when the cryptographic puzzle  in PoW is replaced with federated learning, the cluster head, namely the pool manager, can coordinate the locally-computed results updated by pool members (miners)  to form a high-quality model.

However, the fit of the organization structure does not imply  it is non-trivial to realize PoFL. In detail, the merit of federated learning lies in that all clients collaboratively train a high-quality model while keeping their training data private from others, where the usufruct  and ownership of local training data is an integration.
However, the openness of Blockchain endows  anyone with the right of  mining, i.e., the local model training in PoFL, which results in the separation between the data usufruct  and ownership. This may lead to data privacy leakage in model training and verification, deviating from the original intention of federal learning.

 Our paper aims to address the above challenge so that PoFL can meet practical demands.  To the best of our knowledge, our paper is the first work to employ federal learning as the proof of work for Blockchain, where the main contributions are summarized as follows:
\begin{itemize}
  \item A general framework of PoFL is introduced, which clarifies the interaction among all  entities involved and designs a new PoFL block structure for supporting block verification so as to realize the consensus for Blockchain.
  \item A reverse game-based data trading mechanism is proposed to leverage market power for guarding against  training data leakage. This mechanism can determine the optimal data trading probability and pricing strategy even when a pool conceals  his\footnote{We denote the pool as ``he'' and the data provider as ``she'' for differentiation in the following.} profit from  privacy disclosure.  Driven by the proposed mechanism, a pool with a high  risk of privacy disclosure has a low data trading chance and needs to pay a high purchase price,  which further incentivizes pools to train models without any data leakage.
  \item A privacy-preserving model verification mechanism is designed to verify the accuracy of a trained model while  preserving  the privacy of the task requester's test data as well as the pool's submitted model. The proposed mechanism consists of two parts: the homomorphic encryption (HE)-based label prediction and the secure two-party computation (2PC)-based label comparison.
\end{itemize}

The remaining part of the paper proceeds as follows.
In Section \ref{sec:flow}, we introduce an overview of our proposed PoFL. To cope with the training data exchange between data providers and pools, we proposed an incentive-compatible data trading mechanism based on the reverse game in Section \ref{sec:IC}. In order to calculate the accuracy of model without disclosing either the test data or the model itself, we proposed a privacy-preserving model verification mechanism  based on HE and 2PC in Section \ref{sec:2PC}. We conduct an experimental evaluation to illustrate the effectiveness and efficiency of our proposed PoFL in Section \ref{sec:experiment}, and summarize the most related work in Section \ref{sec:related}. The whole paper is concluded in Section \ref{sec:conclusion}.

\section{Framework of PoFL}
\label{sec:flow}
PoFL employs federated learning to solve realistic problems with practical value to achieve consensus in Blockchain. In our proposed PoFL framework, the problems such as image recognition and semantic analysis are published as tasks on a platform by {\it requesters}, along with the corresponding rewards as incentives for mining. Considering that the amount of reward can indirectly reflect the importance and urgency of a task, we assume that the platform will choose the task with the highest reward as the current problem which should be solved by miners as a proof-of-work to reach consensus. In the case of multiple tasks with the same highest reward, the platform will select the earliest-arrived one.

As mentioned above, pooled-mining has become a development trend in Blockchain currently, which  has a similar organization structure with federated learning. Therefore, we investigate PoFL under the   pooled-mining paradigm in this paper, whose framework is shown in Figure  \ref{fig:framework}.

According to Figure  \ref{fig:framework}, pool members (miners) train machine learning (ML) models individually based on their private data to obtain locally-computed updates, which will be aggregated by  the pool manager  so as to achieve a high-quality model. This process is named as \textit{federated mining}. After accuracy computation with the requester's test data, each pool manager packages transactions and generates a new block containing the  information needed for model verification. Once receiving blocks, full nodes will identify the winner pool through verifying model accuracy. The winner pool should send his  model to the requester,  thus obtaining the accounting right and the corresponding reward.
\begin{figure}[ht]
\centerline{
\includegraphics[width=9.5cm, height=6cm]{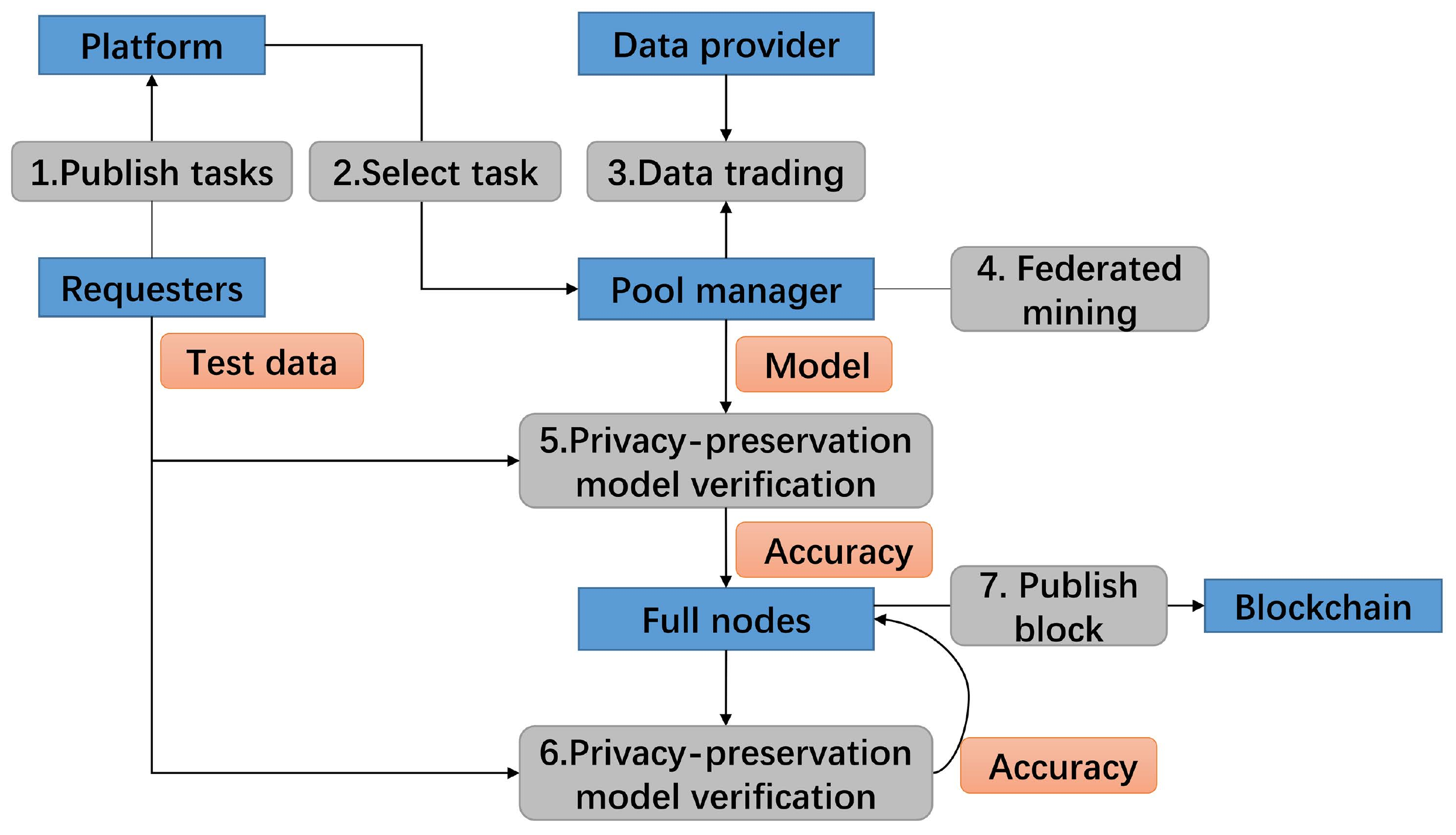}}
\caption{Framework of PoFL.}
\label{fig:framework}
\end{figure}

To implement federated learning, miners in the same pool will collectively train their models without any centralized storage, which is coordinated by the pool manager. With respect to the different storage characteristics of data, there are various types of federated learning, such as horizontal federated learning, vertical federated learning, and federated transfer learning.
Here we use horizontal federated learning as an example to illustrate the federated mining process\footnote{Other types of federated learning can also be implemented using our proposed mechanism. We omit them due to the limited length of the paper.}, in which the data of miners have the same kind of attributes belonging to different individuals.
It is shown in Figure  \ref{process} and described as follows:
\begin{figure}[ht]
\centerline{
\includegraphics[width=6cm, height=4cm]{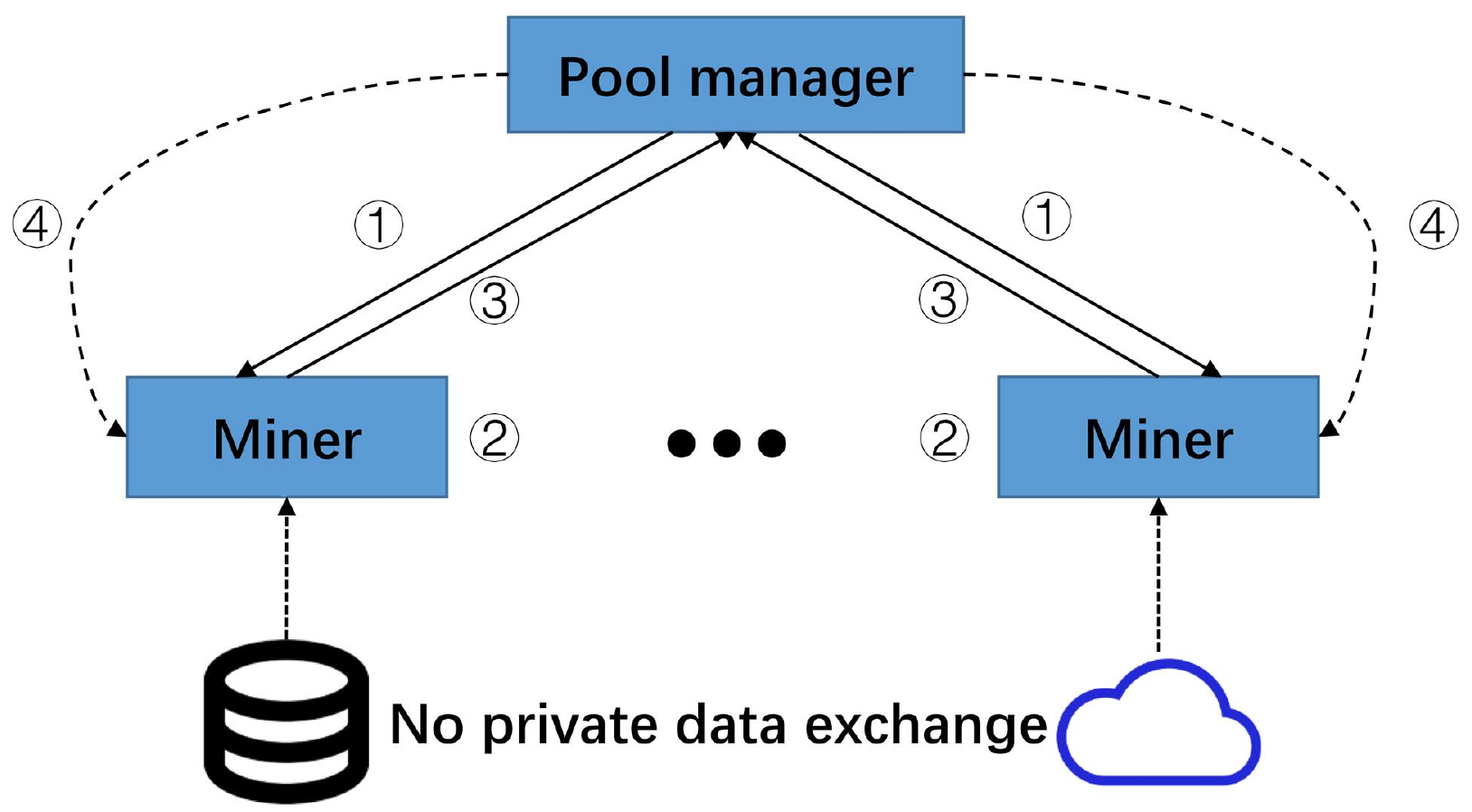}}
\caption{Federated mining process.}
\label{process}
\end{figure}

\begin{enumerate}
\item  The pool manager broadcasts an initial model as well as a public key to the pool.
\item  Each miner individually  calculates the gradient value based on the private training data and other information received from the pool manager, which is encrypted using the received public key.
\item The pool manager decrypts the locally-computed updates sent by miners, aggregating them for establishing  a shared quality-improved model.
\item According to the deadline for submitting model accuracy published by the platform, the pool manager determines whether or not the next round of training is needed. If not,  federated mining is terminated. Otherwise, the aggregated results will be sent back to each miner for implementing Steps 2 and 3 repeatedly.
\end{enumerate}

The sensitive training data could be either owned by the pool members themselves or bought from some data providers. In the latter case, the ownership and the usage right of the training data are separated, posing a potential risk of privacy leakage. To avoid this undesirable situation, we propose a reverse game-based data trading mechanism in Section \ref{sec:IC}, which leverages the power of the market so that the lower risk of the pool leaking data privacy,
the higher the probability
that the pool can buy sensitive data, and the lower the price
the pool needs to pay, thus incentivizing the pool to
behave well.

Once a pool accomplishes the training process, the pool manager will calculate the accuracy of the final model based on the test data provided by the requester. However, on one hand, the requester's test data may also be sensitive that are not suitable to directly send to the pool or any third party for  accuracy verification due to the privacy leakage concern; on the other hand,
the pool manager is not willing to publish the trained model explicitly before the end of consensus competition to avoid  being  plagiarized by other competitors\footnote{In this case, a competitor may make an opportunistic choice--not training model but  submitting a result  just slightly modified based on the contributions published  by others.}. To overcome the above challenges, we propose a privacy-preserving model verification mechanism, which is detailed in  Section \ref{sec:2PC}.  Note that this mechanism can also be employed by full nodes to verify the accuracy of all received models with the help of the requester's test data.

 However,  the current block structure does not contain any   model-related parameters, which makes the full nodes not able to  verify blocks to achieve the final consensus in Blockchain. To solve this problem, we design a new  block structure, which is shown in Figure  \ref{block}.
The proposed block header keeps some information in the existing block structure, such as the hash value of the previous block header for maintaining the chain structure, the Merkle tree root for securing transactions, the block height counting from the first block, and the tamper-resistant timestamp.
In addition, in order to facilitate other nodes verifying the accuracy of the model for each pool in the network, we include {\it task}, $V_m$ and \textit{accuracy}  in the POFL block header. To be specific, {\it task} is the current executed one by all miners, which is selected by the platform; $V_m$ is the information used for verifying model accuracy, detailed in Section \ref{sec:2PC}; and {\it accuracy} indicates the  accuracy of the ML model trained by  the pool.
\begin{figure}[ht]
\centerline{
\includegraphics[width=6.5cm, height=4cm]{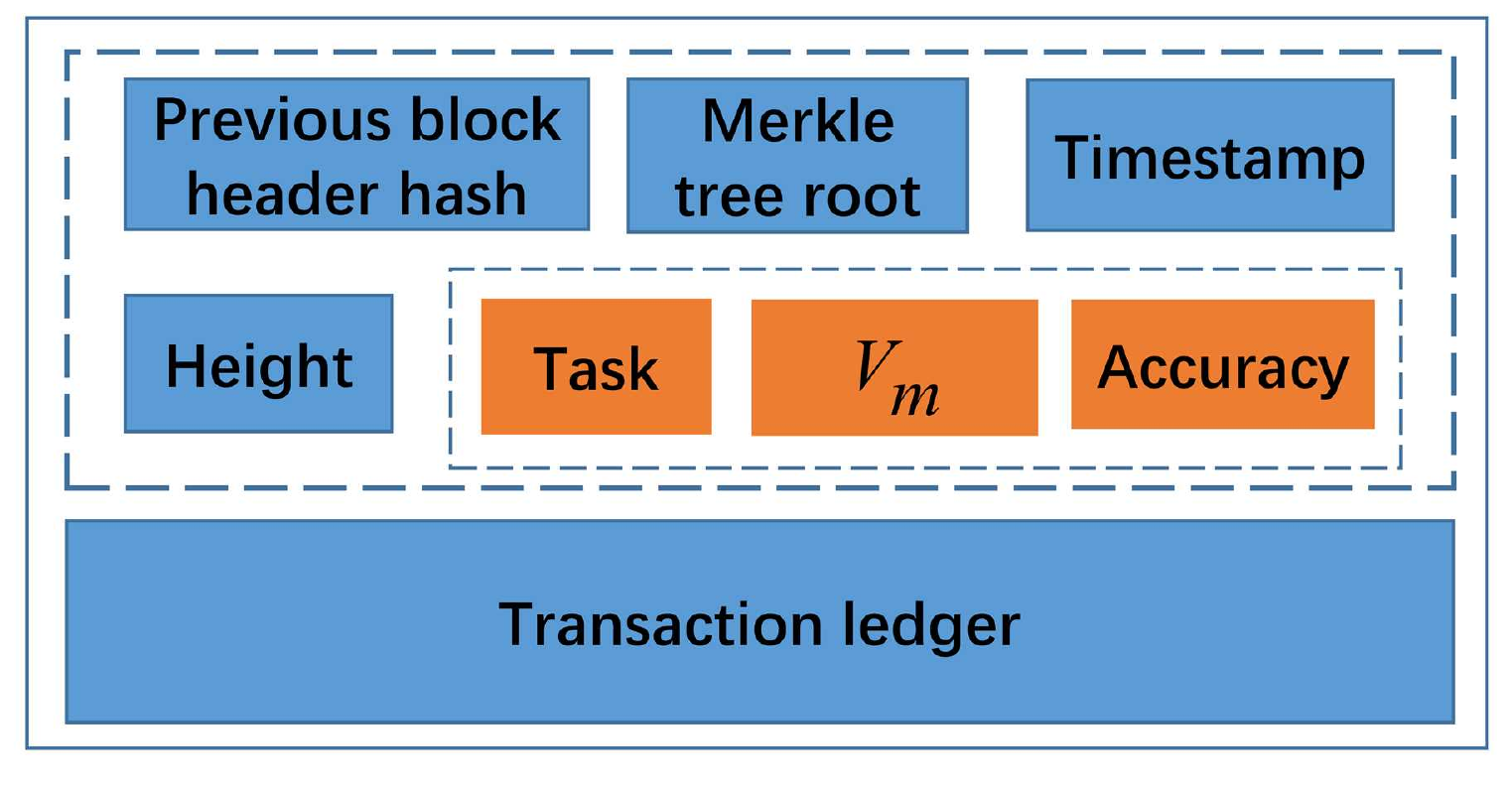}}
\caption{Block structure of PoFL.}
\label{block}
\end{figure}

\section{Reverse game-based data trading mechanism}
\label{sec:IC}
As mentioned above, when  training data are bought from some data providers,  there is a potential risk of privacy leakage due to the separation between the ownership and the usage right of the data.
An intuitive countermeasure is that data providers encrypt their sensitive data and send them to the pool for miners' training. However, this will make the training time too long, which further impacts the block generation rate and overall performance of  Blockchain.
Take the training on the encrypted MNIST dataset \cite{LeCun10} as an example. It takes 570.11 seconds to run a single instance using CryptoNets model \cite{Gilad-Bachrach16}
and nearly 57 hours to train two epochs using CryptoCNN model \cite{Xu19}. Therefore, it is not practical to employ encrypted data to train ML models for consensus in Blockchain.

To tackle the above challenge,  we propose a reverse game-based data trading mechanism, which takes advantage of   market power   to make a rational pool maximize his utility only when he trains the model without any data leakage. The proposed mechanism leverages the reverse game to describe the cooperative and conflictive relationship between the pool and the data provider, which is an efficient tool to explore the solutions for a class of private-information games. The pool also has private information in reality, such as the net profit of pooled-mining and the profit of disclosing data privacy. This information is tightly related to the probability that the pool discloses training data, which will never be told to the data provider. Leveraging on the reverse game theory, the data provider can determine the optimal trading probability and the corresponding price without knowing the private information of the pool, thus preserving the privacy  of the training data using the market as a tool.

In our mechanism, the data trading probability and its purchase price depend  on not only the private information of the pool but also his reputation.  The reputation of the pool  is calculated  according to his data privacy disclosure record. In order to establish a credible reputation mechanism, we publish the data trading records between the pool and the data provider on Blockchain. Due to the features of transparency and traceability of Blockchain, once a piece of data is disclosed, the malicious pool can be easily detected and will be accountable for this information leak with a reduced reputation. This further affects the data trading probability and the purchase price of the pool in the future.

The reverse game-based data trading mechanism selects the pool manager as the representative to conduct data trading with data providers. This design is based on two considerations: 1) It can avoid the waste of resources caused by two pool members in the same pool purchasing two copies of the same data; 2) The pool manager is able to have the knowledge of the data amount used for training in the entire pool, so that the amount of training data for each miner can be distributed as evenly as possible, which acts as the cornerstone of the average income allocation after successful mining. Once the data trading process is finished, the data provider directly sends the data to the corresponding miner, which helps avoid the communication overhead and possible privacy leakage risks caused by the pool manager's transmission\footnote{Each miner can acquire training data directly from one or more data providers which are assigned by the pool manager according to the  data trading contracts, rather than obtain all data from the pool manager who need to collect training data  from all data providers.}.

To optimize the  incomplete-information
game between the data provider and the pool, the data provider is empowered to design a game rule, which can enforce the pool  to derive the strategy based on his real private information. This requires the game rule to satisfy the incentive compatibility (IC) principle, implying that the game rule can enable the pool  to obtain a higher utility when he develops the strategy based on the real private information rather than  the fake one. In particular, the expected utility  ($U_{pool}$) of the pool within a duration $T$  is defined as:
\begin{equation}\label{up}
U_{pool}=\int_{0}^{T} p\cdot(Q+\tilde{c}(r,V)-m-D_s){\mathrm{d}t}.
\end{equation}
In \eqref{up},  $Q$ is the legally expected net profit of a pool from this data trading; $\tilde{c}(r,V)$ is the expected profit of  leaking those sensitive data,
which is closely related to the pool's reputation $r \in [0,1]$ and the value (sensitivity) of the data, denoted by $V$,  so we define it as $\tilde{c}(r,V)=\tilde{\alpha}(1-r)+\tilde{\beta}V$ with
$\tilde{\alpha}, \tilde{\beta} \geq 0$ being coefficients;
$m$ and $D_s$ are respectively the bid of the pool and the markup price proposed by the data provider, and thus the final price of the traded data is $m+D_s$; $p$ is the probability that the pool  can successfully purchase data from the data provider,
which can be obtained by
\begin{equation} \label{p}
p=\varepsilon_1 r+\varepsilon_2 \frac{D_s}{\bar{D_s}}+(1-\varepsilon_1-\varepsilon_2)\frac{m}{\bar{m}}.
\end{equation}
In \eqref{p}, $\bar{m}$ and $\bar{D_s}$ are respectively the highest values  of $m$ and $D_s$  in the recent rounds of data trading; $\varepsilon_1$ and $\varepsilon_2$ are coefficients satisfying $ \varepsilon_1, \varepsilon_2 \geq 0$ and $\varepsilon_1+\varepsilon_2 \leq 1$. The above equation indicates that the higher the reputation of the pool or the higher the final price for the sensitive data, the more willing the data provider to sell the data. Our mechanism requires that at the beginning of the data trading process, the rule for calculating the successful trading probability  shown in \eqref{p} will be informed to both parties, i.e., the data provider and the pool.

Similarly, the expected utility ($U_{provider}$) of the data provider  within a duration $T$ is defined as:
\begin{equation}
U_{provider}=\int_{0}^{T} p\cdot(m+D_s-c(r,V)){\mathrm{d}t},
\label{provider_benefit}
\end{equation}
In \eqref{provider_benefit},  $c(r,V)$ is the expected loss brought by sensitive data leakage,
defined as $c(r,V)=\alpha(1-r)+\beta V$,  which has the similar definition with $\tilde{c}(r,V)$ and $\alpha, \beta\geq 0$ are coefficients. 
With a higher value of the sensitive data,
the markup price  proposed by the data provider, namely $D_s$, should also be higher. So we define $V=\eta D_s$, with $\eta >0$ being the coefficient.

\begin{figure}[ht]
\centerline{
\includegraphics[width=6cm, height=2cm]{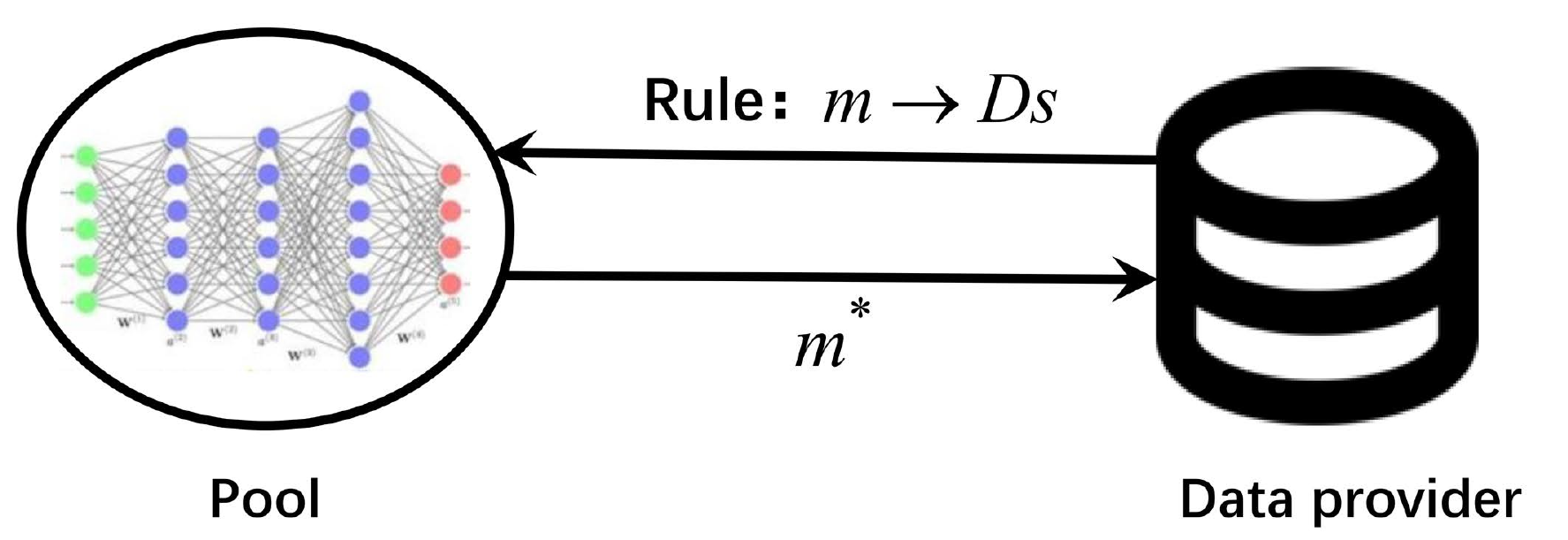}}
\caption{Framework of reputation-based data trading mechanism.}
\label{MD}
\end{figure}

In the reverse game-based data trading mechanism, the strategies of the pool and the data provider are respectively $m$ and $D_s(m)$. In other words, the  strategy of the data provider is not a value but a rule, i.e., a function, which empowers the data provider to force the pool to make the optimal bid based on his real private information, such as
the legally expected net profit ($Q$) of a pool from this data trading and his expected profit ($\tilde{c}(r,V)$) from  leaking sensitive data.
Our proposed mechanism is illustrated in Fig. \ref{MD}, which includes three phases:

\begin{itemize}
\item Phase 1: The data provider first designs an optimal game rule $D_s^*(m)$ which can maximize her utility.
\item Phase 2: Once receiving $D_s^*(m)$, the pool needs to decide whether or not to accept the game rule. If yes, he will calculate the optimal bid $m^*$ according to $D_s^*(m)$ for maximizing his utility. Otherwise, he just ignores the received message.
\item Phase 3: If the data provider does not receive $m^*$ before a given deadline, the trading negotiation is terminated. Otherwise,  she will calculate $D_s^*$ and thus the probability of data trading in this round, i.e., $p$, can also be derived in light of \eqref{p}. Once the data provider and the pool reaches an agreement on the data trading, the final price of the training data is $m^*+D_s^*$.
\end{itemize}

In the following, we will introduce how to obtain the optimal strategies of both parties, i.e., $D_s^*(m)$  and $m^*$. Let $F_{d}=p\cdot(m+D_s-c(r,V))=[\varepsilon_1 r+\varepsilon_2 \frac{D_s}{\bar{D_s}}+(1-\varepsilon_1-\varepsilon_2)\frac{m}{\bar{m}}]\cdot[m+D_s-\alpha(1-r)-\beta \eta D_s]$, the integrand of \eqref{provider_benefit}. To maximize $U_{provider}$, we adopt the  variational method. In detail,  through solving the Euler-Lagrange equation $\frac{\partial F_{d}}{\partial D_s}-\frac{\mathrm{d}}{\mathrm{d}r}\frac{\partial F_{d}}{\partial D_s^{'}}=0$  under $\frac{\partial^2F_{d}}{\partial D_s^2}=\frac{2 \varepsilon_2 (1-\beta \eta)}{\bar{D_s}}<0$, we have
\begin{small}
\begin{equation}
\label{eq:Ds}
D_s^*(m)=\frac{\varepsilon_2[m-\alpha(1-r)]+(1-\beta \eta)\bar{D_s}[\varepsilon_1 r+(1-\varepsilon_1-\varepsilon_2)\frac{m}{\bar{m}}]}{2\varepsilon_2(\beta \eta-1)}.
\end{equation}
\end{small}

Similarly,
let the integrand of \eqref{up} be $F_{p}=p\cdot(Q+\tilde{c}(r,V)-m-D_s)=[\varepsilon_1 r+\varepsilon_2 \frac{D_s^{*}}{\bar{D_s}}+(1-\varepsilon_1-\varepsilon_2)\frac{m}{\bar{m}}] \cdot[Q+\tilde{\alpha}(1-r)+\tilde{\beta} \eta D_s^{*}-m-D_s^{*}]$. The optimal strategy of  the pool can also be calculated through the  variational method. That is,
\begin{small}
\begin{equation}
\label{eq:m}
\begin{split}
&m^*=\\
&\frac{A_0A_2A_3\varepsilon_1 r+(A_0A_2\varepsilon_2+\bar{D_s}A_0^{2}A_1)[Q+\tilde{\alpha}(1-r)]-\varepsilon_1 r\bar{D_s}A_0^{2}}{2\varepsilon_2 A_0A_2+2A_1A_0^{2}\bar{D_s}-2\varepsilon_2(\tilde{\beta} \eta-1)A_2^{2}-2A_0A_1A_2A_3}  \\
&+\frac{[2\varepsilon_2 A_2A_3+A_0A_1A_3-A_0\varepsilon_2][(1-\beta \eta)\bar{D_s}\varepsilon_1 r-\varepsilon_2 \alpha(1-r)]}{2\varepsilon_2 A_0A_2+2A_1A_0^{2}\bar{D_s}-2\varepsilon_2(\tilde{\beta} \eta-1)A_2^{2}-2A_0A_1A_2A_3},
\end{split}
\end{equation}
\end{small}
in which
\begin{align}
&A_0=2\varepsilon_2(\beta \eta-1),
A_1=(1-\varepsilon_1-\varepsilon_2)\frac{1}{\bar{m}},     \notag \\
&A_2=\varepsilon_2+(1-\beta \eta)(1-\varepsilon_1-\varepsilon_2)\frac{\bar{D_s}}{\bar{m}}, A_3=\bar{D_s}(\tilde{\beta} \eta-1). \notag
\end{align}

\begin{theorem}[]
\label{thm:Ds}
When $1-\beta \eta<0$, $D_s^*(m)$ in (\ref{eq:Ds}) is the equilibrium strategy of the data provider.
\end{theorem}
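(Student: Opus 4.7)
The plan is to establish that $D_s^*(m)$ in (\ref{eq:Ds}) is the data provider's best response for every pool bid $m$, so that it serves as the rule committed to in Phase 1 of the mechanism. I would work directly from the utility functional $U_{provider}$ in (\ref{provider_benefit}) and apply the variational framework already invoked in the paper. Because the integrand $F_d$ has no explicit dependence on $D_s'$, the Euler-Lagrange equation $\partial F_d/\partial D_s - \frac{d}{dr}\bigl(\partial F_d/\partial D_s'\bigr)=0$ collapses to the pointwise first-order condition $\partial F_d/\partial D_s = 0$, and solving for $D_s$ should reproduce (\ref{eq:Ds}).

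Concretely, I would expand $F_d = \bigl[\varepsilon_1 r + \varepsilon_2 D_s/\bar{D_s} + (1-\varepsilon_1-\varepsilon_2) m/\bar{m}\bigr]\bigl[m + D_s - \alpha(1-r) - \beta\eta D_s\bigr]$ and differentiate via the product rule. Since the first bracket is linear in $D_s$ with slope $\varepsilon_2/\bar{D_s}$ and the second is linear with slope $1-\beta\eta$, $\partial F_d/\partial D_s$ is linear in $D_s$ and the stationarity equation admits a closed-form solution. Matching the rearranged expression to (\ref{eq:Ds}) is then routine algebra. To promote this stationary point to a genuine maximum, I would then verify the second-order sufficient condition $\partial^2 F_d/\partial D_s^2 < 0$; direct differentiation yields $2\varepsilon_2(1-\beta\eta)/\bar{D_s}$, which is the expression already displayed just above (\ref{eq:Ds}). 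With $\varepsilon_2\ge 0$ and $\bar{D_s}>0$, strict concavity in $D_s$ is equivalent to $1-\beta\eta<0$, exactly the hypothesis of the theorem.

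Under that hypothesis $F_d$ is a downward-opening parabola in $D_s$, so the unique stationary point is the global maximizer of the integrand for every admissible $m$, and since $U_{provider}$ integrates $F_d$ with a nonnegative time weight, pointwise maximization transfers to maximization of the functional. This pins $D_s^*(m)$ down as the data provider's equilibrium strategy. The main obstacle is essentially bookkeeping: tracking the constants $(\varepsilon_1,\varepsilon_2,\bar m,\bar{D_s},\alpha,\beta,\eta)$ carefully through the algebraic rearrangement so that the closed-form solution lines up with the fraction in (\ref{eq:Ds}). No convergence or boundary issues arise, because $F_d$ is a polynomial in $D_s$ whose quadratic coefficient is made strictly negative precisely by the stated inequality, ruling out any boundary-maximum pathology.
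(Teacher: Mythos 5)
Your proposal matches the paper's own argument: both use the variational method, observe that the absence of $D_s'$ in $F_d$ collapses the Euler--Lagrange condition to the pointwise first-order condition whose solution is (\ref{eq:Ds}), and then certify a maximum via $\partial^2 F_d/\partial D_s^2 = 2\varepsilon_2(1-\beta\eta)/\bar{D_s} < 0$, which holds exactly when $1-\beta\eta<0$. The only nit is that you write $\varepsilon_2\ge 0$ where strict positivity of $\varepsilon_2$ is needed for strict concavity (the paper's proof likewise assumes $\varepsilon_2>0$), but this does not change the argument.
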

\begin{proof}
According to the variational method, when $\frac{\partial^2F_d}{\partial D_s^2}-\frac{\mathrm{d}}{\mathrm{d}t}\frac{\partial^2F_d}{\partial D_s \partial D_s^{'}} \leq 0$ and $\frac{\partial^2F_d}{\partial (D_s^{'})^2}\leq 0$, $F_d$ can be maximized. Because $F_d$ is not related to $D_s^{'}$, $\frac{\mathrm{d}}{\mathrm{d}t}\frac{\partial^2F_d}{\partial D_s \partial D_s^{'}}=0$ and $\frac{\partial^2F_d}{\partial (D_s^{'})^2}=0$.  Thus, the condition of maximizing $F_d$ is simplified to $\frac{\partial^2F_d}{\partial D_s^2}\leq 0$, implying   $\frac{2 \varepsilon_2 (1-\beta \eta)}{\bar{D_s}}\leq 0$ should be met. However,  if  $\frac{2 \varepsilon_2 (1-\beta \eta)}{\bar{D_s}}=0$, $D_s^*(m)$ in (\ref{eq:Ds}) will be meaningless, so $\frac{\partial^2F_d}{\partial D_s^2}=\frac{2 \varepsilon_2 (1-\beta \eta)}{\bar{D_s}}< 0$ should be satisfied. Due to  $\varepsilon_2>0$ and $\bar{D_s}>0$, when $1-\beta \eta<0$ is satisfied, $F_d$ can be maximized. Thus, the theorem is proved.
\end{proof}

By the similar way, we can obtain the following theorem:
\begin{theorem}[]
\label{thm:m}
When $1-\tilde{\beta} \eta<0$, $m^*$ in  (\ref{eq:m}) is the equilibrium strategy of the pool.
\end{theorem}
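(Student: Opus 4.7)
The plan is to mirror the structure of the proof of Theorem \ref{thm:Ds}, applying the variational method to the pool's functional $U_{pool}=\int_0^T F_p\,dt$ with $m$ in the role that $D_s$ played there. Because the integrand $F_p$ depends on neither $m'$ nor $t$ explicitly, the Euler--Lagrange equation collapses to the pointwise stationarity condition $\partial F_p/\partial m=0$, and the sufficient second-order condition for a maximum reduces to $\partial^2 F_p/\partial m^2<0$.

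Once $D_s^*(m)$ from (\ref{eq:Ds}) is substituted into $F_p$, the key structural observation is that $D_s^*(m)$ is affine in $m$ with slope $A_2/A_0$. Consequently both the probability factor $P=\varepsilon_1 r+\varepsilon_2 D_s^*/\bar{D_s}+(1-\varepsilon_1-\varepsilon_2)m/\bar{m}$ and the net-gain factor $R=Q+\tilde{\alpha}(1-r)+(\tilde{\beta}\eta-1)D_s^*-m$ are affine in $m$, so $F_p=PR$ is a quadratic function of $m$ whose second derivative is the constant $2(\partial P/\partial m)(\partial R/\partial m)$.

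A direct chain-rule calculation gives $\partial P/\partial m=(\varepsilon_2 A_2+A_0 A_1\bar{D_s})/(A_0\bar{D_s})$ and $\partial R/\partial m=[(\tilde{\beta}\eta-1)A_2-A_0]/A_0$. Multiplying and simplifying yields
\begin{equation*}
\frac{\partial^2 F_p}{\partial m^2}=\frac{-2[\varepsilon_2 A_2+A_0 A_1\bar{D_s}][A_0-(\tilde{\beta}\eta-1)A_2]}{A_0^2\bar{D_s}},
\end{equation*}
which is precisely $-1/(A_0^2\bar{D_s})$ times the denominator of $m^*$ in (\ref{eq:m}); this simultaneously confirms that (\ref{eq:m}) really is the unique stationary point delivered by the first-order condition and that its denominator is nonzero whenever the second-order condition is strict. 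Invoking Theorem \ref{thm:Ds} so that $A_0=2\varepsilon_2(\beta\eta-1)>0$ is already positive, the desired sign reduces to a condition on the two bracketed factors, which I expect to collapse to $1-\tilde{\beta}\eta<0$ after substituting the definitions of $A_0,A_1,A_2,A_3$ and cancelling.

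The main obstacle will be precisely that last sign analysis: the two bracketed quantities mix $\beta\eta$ and $\tilde{\beta}\eta$ through cross terms in $A_2$, so it is not immediately transparent that the joint inequality is equivalent to a condition purely on $\tilde{\beta}\eta$. I would carefully factor out $(1-\beta\eta)$ wherever it appears and use $A_3=\bar{D_s}(\tilde{\beta}\eta-1)$ to fuse the two bracketed factors into a single product whose sign is governed by $\tilde{\beta}\eta-1$. The degenerate case $\partial^2 F_p/\partial m^2=0$ must also be excluded, using the same observation as at the end of the proof of Theorem \ref{thm:Ds} that equality would render $m^*$ meaningless.
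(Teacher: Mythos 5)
Your setup is the right one and in fact goes well beyond what the paper records: the paper's entire proof of Theorem~\ref{thm:m} is the phrase ``by the similar way,'' so the intended argument is exactly the mirror of Theorem~\ref{thm:Ds} that you describe — reduce the Euler--Lagrange condition to pointwise stationarity, check $\partial^2 F_p/\partial m^2<0$, and exclude the degenerate case. Your intermediate computations are correct: $D_s^*(m)$ is affine in $m$ with slope $A_2/A_0$, $F_p$ is quadratic in $m$, and your expression for $\partial^2 F_p/\partial m^2$ does equal $-1/(A_0^2\bar{D_s})$ times the denominator of \eqref{eq:m} (one checks that the denominator factors as $2[\varepsilon_2 A_2+A_0A_1\bar{D_s}][A_0-(\tilde{\beta}\eta-1)A_2]$). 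Under the hypothesis of Theorem~\ref{thm:Ds} the first bracket simplifies to $\varepsilon_2[\varepsilon_2+(\beta\eta-1)(1-\varepsilon_1-\varepsilon_2)\bar{D_s}/\bar{m}]>0$, so everything hinges on the sign of $A_0-(\tilde{\beta}\eta-1)A_2$.

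That last step is where the proposal has a genuine gap, and it is not one you can close by ``carefully factoring'': the sign of $A_0-(\tilde{\beta}\eta-1)A_2$ is \emph{not} governed by $1-\tilde{\beta}\eta$ alone. Expanding,
\begin{equation*}
A_0-(\tilde{\beta}\eta-1)A_2=\varepsilon_2\bigl[2(\beta\eta-1)-(\tilde{\beta}\eta-1)\bigr]+(\tilde{\beta}\eta-1)(\beta\eta-1)(1-\varepsilon_1-\varepsilon_2)\tfrac{\bar{D_s}}{\bar{m}},
\end{equation*}
so, for example, with $\varepsilon_1+\varepsilon_2=1$, $\beta\eta=1.5$ and $\tilde{\beta}\eta=3$ both theorems' hypotheses hold yet this quantity is $\varepsilon_2(1-2)<0$, making $\partial^2 F_p/\partial m^2>0$ and the stationary point a minimum. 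Conversely the second-order condition can hold with $\tilde{\beta}\eta<1$. In other words, the condition $1-\tilde{\beta}\eta<0$ is neither necessary nor sufficient for concavity of $F_p$ in $m$ once the chain rule through $D_s^*(m)$ is taken seriously; the correct sufficient condition couples $\tilde{\beta}\eta$, $\beta\eta$, $\varepsilon_1$, $\varepsilon_2$ and $\bar{D_s}/\bar{m}$. Since the paper supplies no computation at all here, your derivation is actually evidence that the theorem as stated needs either the extra hypothesis $A_0-(\tilde{\beta}\eta-1)A_2>0$ (equivalently, a positive denominator in \eqref{eq:m}) or a reading in which $D_s$ is held fixed when differentiating $F_p$ in $m$ — in which case $\partial^2F_p/\partial m^2=-2(1-\varepsilon_1-\varepsilon_2)/\bar{m}\le 0$ and the stated condition $1-\tilde{\beta}\eta<0$ plays no role. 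Either way, the plan's final ``collapse to $1-\tilde{\beta}\eta<0$'' cannot be carried out as hoped.
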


\begin{theorem}[]
\label{thm:IC}
The game rule $D_s^*(m)$ designed by the data provider  is incentive-compatible.
\end{theorem}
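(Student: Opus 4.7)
The plan is to reinterpret incentive compatibility in this setting as a uniqueness-of-best-response statement: given the rule $D_s^*(m)$, the pool's expected utility depends only on his bid $m$ and his true private parameters (primarily $Q$ and $\tilde{c}(r,V)=\tilde{\alpha}(1-r)+\tilde{\beta}V$); the game rule is IC if the bid $m^*$ in \eqref{eq:m}, which the pool derives from these \emph{true} parameters, is the \emph{unique} maximizer of his utility. Any attempt to ``misreport'' by running the same derivation with falsified parameters $(Q',\tilde{\alpha}',\tilde{\beta}')$ yields some other bid $m'\neq m^*$, and by uniqueness this $m'$ would strictly lower his actual utility, so truthful use of private information is the dominant strategy.

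To execute this, I would substitute $D_s^*(m)$ from \eqref{eq:Ds} into the integrand $F_p$ of \eqref{up}. Because $D_s^*(m)$ is affine in $m$, the composition $\hat F_p(m):=F_p\!\left(m,D_s^*(m)\right)$ is a polynomial of degree at most two in $m$. The first-order condition $\partial \hat F_p/\partial m=0$ is exactly the Euler--Lagrange equation already solved to produce \eqref{eq:m}, so $m^*$ is automatically a stationary point. The remaining task is to verify the strict second-order condition $\partial^{2}\hat F_p/\partial m^{2}<0$, which promotes this stationary point to the unique global maximizer of $\hat F_p$ over $m$; this is precisely the same concavity check that underlies Theorem \ref{thm:m}, so I expect to reuse the hypothesis $1-\tilde{\beta}\eta<0$ along with $1-\beta\eta<0$ from Theorem \ref{thm:Ds} (the latter is needed so that the substituted $D_s^*(m)$ is itself meaningful).

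The main obstacle will be the algebraic bookkeeping for the coefficient of $m^{2}$ in $\hat F_p$ after substitution. Expanding produces cross-terms mixing $\varepsilon_1,\varepsilon_2,\bar m,\bar D_s,\tilde{\beta},\eta$ with the pool's private parameters, and one must collect them and confirm the leading coefficient is negative under the stated sign conditions; this is where essentially all the work lives. Once strict concavity of $\hat F_p$ is established, IC is immediate: for every $m'\neq m^*$ we have $\hat F_p(m')<\hat F_p(m^*)$, so integrating over $[0,T]$ gives $U_{pool}$ strictly larger when the pool computes his bid from his real $(Q,\tilde{\alpha},\tilde{\beta})$ than from any fake alternative, which is exactly the IC property the theorem claims.
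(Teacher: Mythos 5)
Your proposal follows essentially the same route as the paper: the paper's proof likewise argues that fake private information yields a bid $\hat{m}\neq m^*$, and since $m^*$ derived from the true $(Q,\tilde{c}(r,V))$ maximizes $U_{pool}$ (by the concavity argument underlying Theorem \ref{thm:m}), any misreport gives no higher utility. Your version is somewhat more careful in that it makes the uniqueness/strict-concavity of the maximizer explicit (which is needed to turn the paper's weak inequality into a strict one), but the logical skeleton is identical.
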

\begin{proof}
 We assume that $\hat{Q}$ and $\widehat{\tilde{c}(r,V)}$ are the fake private information, based on which a dishonest  strategy $\hat{m}$ is reported to the data provider. Due to $\hat{Q}\neq Q$ and $\widehat{\tilde{c}(r,V)}\neq \tilde{c}(r,V)$, $\hat{m}\neq m^*(Q,\tilde{c}(r,V))$. Because  $U_{pool}$ is maximized  only when his strategy is $m^*(Q,\tilde{c}(r,V))$  according to (\ref{eq:m}), $U_{pool}(\hat{m}(\hat{Q},\widehat{\tilde{c}(r,V)})) \leq U_{pool}(m^*(Q,\tilde{c}(r,V))$. Thus, the pool can maximize his utility only if he reports the strategy based on the true privacy information. In other words, the game rule is incentive compatible.
\end{proof}

The game rule satisfying the incentive-compatibility principle drives the pool to calculate $m^*$ based on his real private information. As the strategy of the data provider is a function of $m^*$, i.e., $D_s(m^*)$, it will also be derived based on real private information. This is equivalent to the situation where both the pool and the data provider make the optimal strategies for maximizing their  utilities based on the global information known by two sides. In addition, the incentive-compatibility of the game rule enforces both $m^*$ and $D_s(m^*)$ to reveal the risk of the data privacy leakage in this round of data trading. This is because they are calculated based on the real private information of the pool, i.e., his legally expected net profit ($Q$) from this data trading and expected extra profit ($\tilde{c}(r,V)$) from leaking sensitive data, while both rational players in the reverse game are  utility-driven, making $m^*$ and $D_s(m^*)$ closely related to whether the data traded in this round will be leaked or not. Therefore, the calculation of successful trading probability in \eqref{p} can reduce or even prevent privacy leakage behavior since it depends on not only the historical data privacy leakage behavior of the pool  but also  the privacy leakage risk in this round.

\section{Privacy-preserving Model Verification Mechanism}
\label{sec:2PC}

As we mentioned above, after each epoch of training, the model accuracy  should be calculated based on the test data of the requester. However, on one hand, the test data may be sensitive so that the requester is not willing to share them; on the other hand,  the trained model should not be published before the end of consensus competition to avoid being  plagiarized by other competitors. To address this challenge, we design a privacy-preserving model verification mechanism to verify the accuracy of the trained model without information disclosure from either the requester or the pool. This mechanism can also be used by full nodes to verify the accuracy of models when receiving blocks from pools.

The accuracy of the trained model is evaluated by  the number of the same predicted labels as the actual ones. Hence, the model verification in our mechanism is divided into two parts, i.e., label prediction and label comparison.
To illustrate how our mechanism accomplishes label prediction and comparison, we take the deep feedforward network model\footnote{Other models can be processed in the same way.} as an example, which is a typical type of deep learning network structure.

 \subsection{Homomorphic encryption-based label prediction}
    According to the design of the deep feedforward network, 
the value of a node $i$ in each layer is calculated based on the inputs $\mathbf{x}=\{x_j\}$ from the last layer, which is $f(\mathbf{x})=\sigma(z_i)$. Here, $\sigma(\cdot)$ represents the sigmoid function and $z_i=\mathbf{x}^T\mathbf{w}_i+b_i$, where $\mathbf{w}_i=\{w_{ij}\}$ is the weight vector and $b_i$  is the bias of node $i$, with $j$ denoting all nodes in the last layer.
Particularly, the input of the first layer is the test data of the requester $A=[a_{ij}]_{I\times M}$, where $a_{ij}$ is the $j$-th attribute in the $i$-th row of data with $1\leq i\leq I$ and $1 \leq j \leq M-1$ with
$I$  indicating how many pieces\footnote{For instance, one piece of data in a medical dataset denotes the data of one patient.}
of test data that the requester owns and $M-1$ denoting the number of attributes in each piece of data; while the last item of each piece of test data, i.e., $a_{iM}$  ($1\leq i\leq I$), is the label.

Since $A=[a_{ij}]_{I\times M}$ is sensitive for the requester, to avoid leaking $A$ in the process of calculating the accuracy of models, an effective method is secure two-party computation (2PC). For example, Rouhani et al. \cite{Rouhani18} realized privacy-preserving label prediction utilizing 2PC. However, the communication and computation complexity of directly using 2PC is too high, making the efficiency of label prediction relatively low. To address this problem, we propose the homomorphic encryption (HE)-based label prediction. In detail, the requester only needs to send the encrypted $A$, denoted as $[\left \langle a_{ij}\right \rangle]_{I\times M}$, as well as the corresponding public key to the pool, who can use HE to calculate the outputs of the second layer, i.e., $f(\mathbf{x})$, based on the first (input) layer of encrypted data attributes. This is because the HE technology can output the desired calculation results in plaintext with operations on the ciphertext.
Assuming that 
the number of nodes in the first layer is the same as the number of attributes, i.e., $M-1$, and the number of nodes in the second layer is represented by $K$. 
Take the first piece of data as an example, we can calculate the value of node $k$ in the second layer as:
  \begin{equation}
\left \langle z_{k}\right \rangle=\sum_{j=1}^{M-1}({\left \langle a_{1j}\right \rangle} \otimes {\left \langle w_{kj}\right \rangle}\oplus \left \langle b_k \right \rangle), \  1\leq k \leq K,
\label{e1}
\end{equation}
in which $\oplus$ and $\otimes$ are the corresponding computation of $+$ and $\times$ under HE, and $\sum$ is also operated with $\oplus$ calculation.

In order to prevent disclosing the real value of $z_{k}$ ($1\leq k \leq K$) from the requester, the pool masks them with a random vector $[h_{k}]_K$, which can be encrypted to $[\left \langle h_{k} \right \rangle]_K$ with the requester's public key. So $\left \langle z_{ik}+h_k \right \rangle$ is calculated as follows:
\begin{equation}
\left \langle z_{k}+h_k \right \rangle= \left \langle z_{k} \right \rangle \oplus  \left \langle h_k \right \rangle, \ \   1\leq k \leq K.
\end{equation}
Once  the requester receives  $\left \langle z_{k}+h_k \right \rangle$, she decrypts them with the private key and sends back $z_{k}+h_k$ $(1\leq k \leq K)$ in plaintext to the pool, which helps to calculates $\sigma(z_{k})$ serving as the input of the next layer in  the deep feedforward network. After that, the rest layers in the deep feedforward network model can be calculated locally by the pool with no need to interact with the requester until deriving the predicted labels.

 \subsection{2PC-based label comparison}
After label prediction, label comparison starts, where the accuracy of the model can be derived through counting the number of the same predicted label $a'_{iM}$ as the actual one $a_{iM}$ ($1\leq i \leq I$). However, there still exists the privacy protection issue in this process. On one hand, the pool is not willing to disclose the predicted label $a'_{iM}$ since the trained model might be inferred from this side information. On the other hand, the actual label $a_{iM}$ is sensitive data for the requester as we mentioned above. To overcome this challenge, we propose a 2PC-based label comparison. The main reason of using 2PC lies in that  it is easy to employ a garbled circuit (GC)  to describe the label comparison process so that this process can be completed efficiently, since the efficiency of 2PC is closely related to the GC construction in both the communication and computation complexity \cite{Mihir13} \cite{Kolesnikov08} \cite{Pinkas09}.

 To realize the proposed method, we firstly design  the boolean circuit for comparison as shown in Fig. \ref{label_compare1}. The inputs of the circuit are $a'_{iM}$ and $a_{iM}, ~1 \leq i \leq I$, both of which are assumed to be $l$-bit,  and the output is the total number $N$ of correct predictions. Based on our circuit design, if $a_{iM}=a'_{iM}$, the statistical total number $N$ of correct prediction adds 1. Otherwise, this is an erroneous one and $N$ adds 0. Even though the accuracy should be calculated by $N/I$, we use the total number of correct predictions $N$ to denote it for simplicity since $I$ is stable in one round of calculation, verification and comparison.

\begin{figure}[h]
\centerline{
\includegraphics[width=9cm, height=6.5cm]{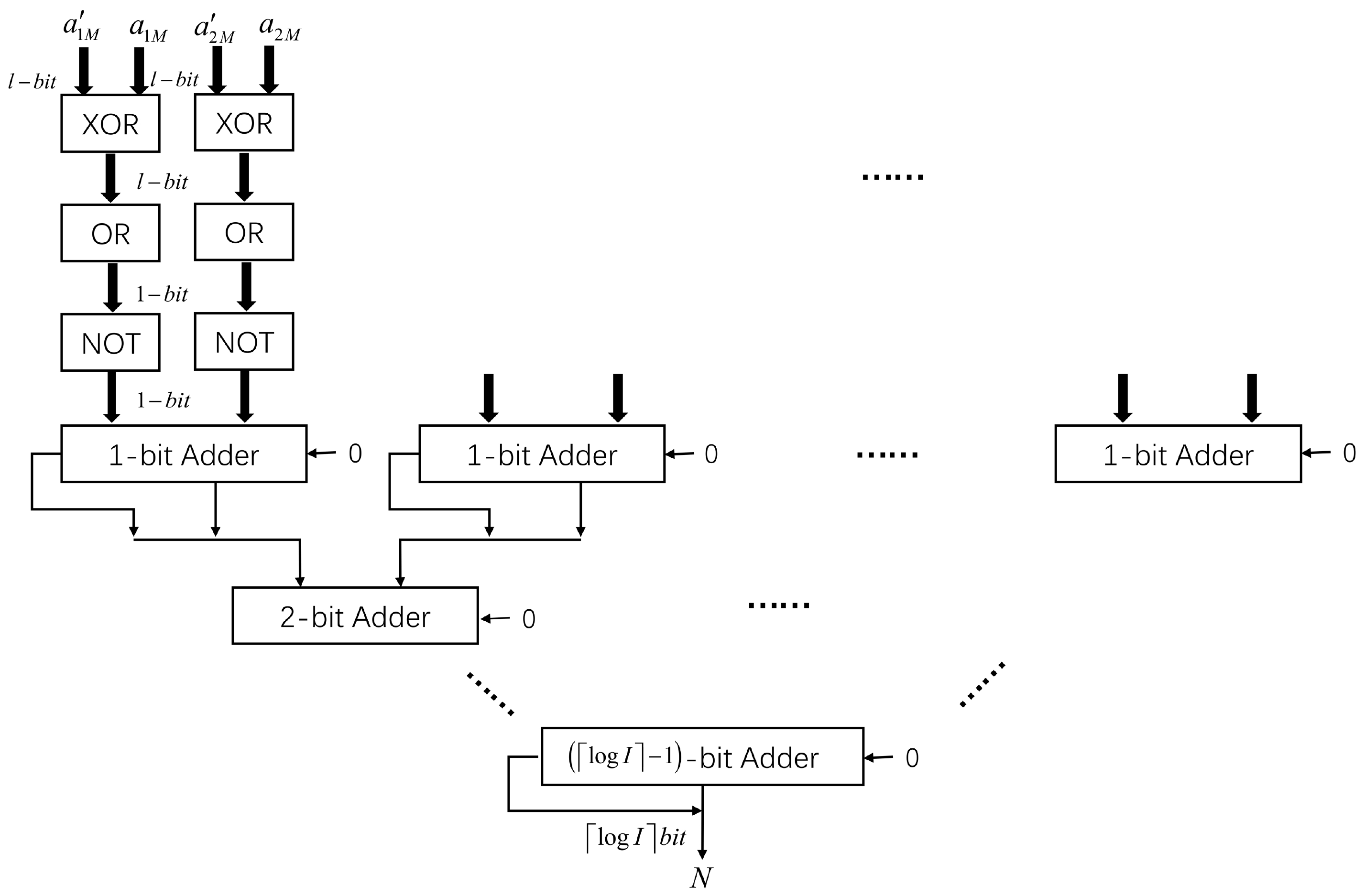}}
\caption{Circuit for label comparison.}
\label{label_compare1}
\end{figure}

The cost of GC is linearly correlated with the number of garbled gates \cite{Huang11}. Taking advantage of  the {\it free-XOR} technology \cite{Kolesnikov08}, XOR can be {\it free}, implying that XOR does not need associated garbled tables
and the corresponding hashing or symmetric key operations.  Therefore, a direct way to improve the performance is reducing the number of costly garbled gates,  namely those non-XOR gates. In detail,  to compare two values, a SUB gate is supposed to be exploited, which is a non-XOR gate.  In our scheme, we replace the SUB gate with a combination of an XOR gate and a NOT gate shown in Fig. \ref{label_compare2}. Note that the NOT gate is also free since it can be implemented using an XOR gate with one of the inputs as constant 1.  We summarize the number of non-free binary gates of our circuit in Table \ref{tab3}, where the $n$-bit Adder denotes all the adders from 1 bit to $\left \lceil log I \right \rceil-1$ bit, namely $n \in \{1, \cdots, \left  \lceil log I \right \rceil-1\}$.
\begin{figure}[h]
\centerline{
\includegraphics[width=4cm, height=3cm]{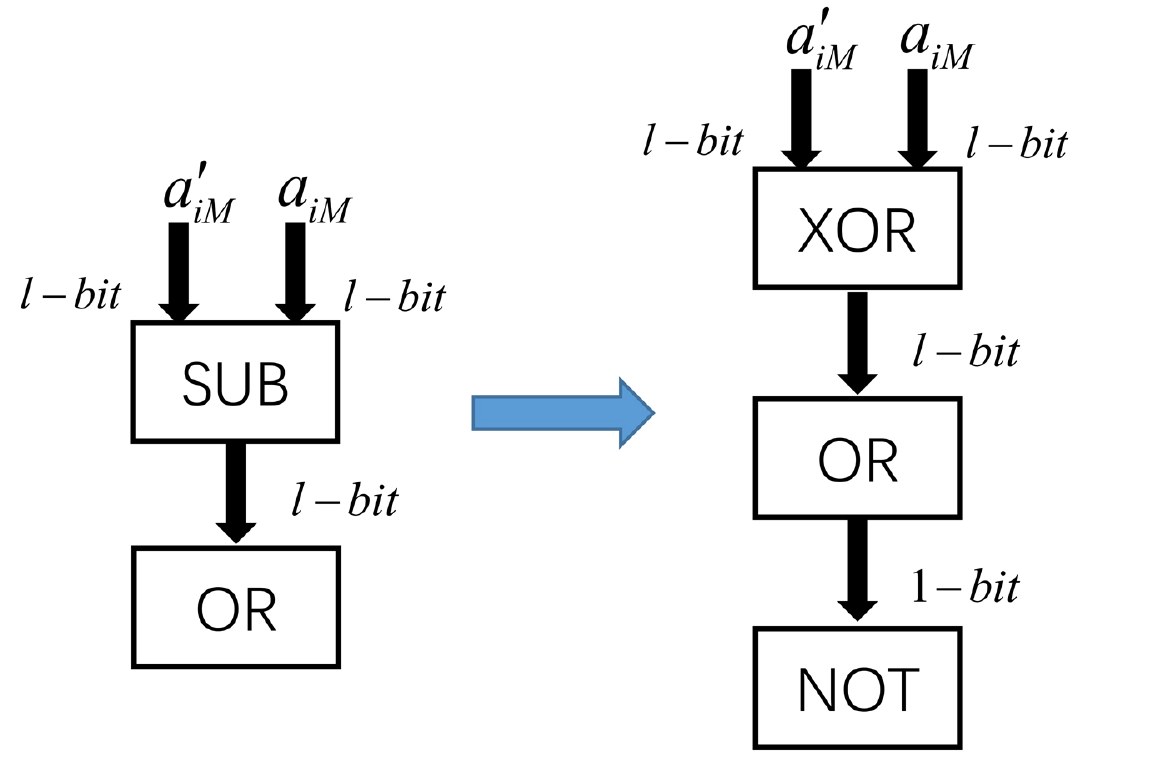}}
\caption{Optimization of the non-free SUB gate.}
\label{label_compare2}
\end{figure}
\begin{table}[ht]
\begin{center}
\caption{Number of Non-Free Gates}
\begin{spacing}{1}
\begin{tabular}{|c|c|c|c|}
\hline
Gate & OR & n-bit Adder & All Gates                                         \\ \hline
Number & $I$  & $\frac{I \left \lceil log I \right \rceil}{2}$ & $I+\frac{I \left \lceil log I \right \rceil}{2}$ \\ \hline
\end{tabular}
\end{spacing}
\label{tab3}
\end{center}
\end{table}

After the fundamental construction of the boolean circuit, we employ JustGarble \cite{Mihir13} for garbling, due to its optimization in high efficiency, proven security and garbled row reduction. In short, to garble a circuit is to encrypt the inputs and outputs of the gates in the circuit and disorder the permutation of them. We omit the description of detailed operation for garbling because they are simple variable operations like a couple of shifts. It is worth mentioning that the time cost of the above garbling way is in the level of nanoseconds per gate, and the size of garbled tables is $10^1$ order of magnitude, which is efficient enough to meet the requirement for model verification.

Note that GC is finally composed of garbled tables, which gives the keys used to encrypt the inputs and output of each gate. After  constructing  GC, the pool sends it to the requester along with the corresponding keys of predicted labels, denoted as $w(a'_{iM})$ ($1\leq i \leq I$). In order to verify the model accuracy without leaking the privacy of each other, the oblivious transfer  (OT) \cite{Harnik08}\cite{Stanislaw09} is adopted, by which
the requester can only find out the corresponding keys of her actual labels $w(a_{iM})$ ($1\leq i \leq I$) without any knowledge on the predicted labels, so neither the pool can know anything about the actual labels. 
After receiving the keys of inputs $a'_{iM}$ and $a_{iM}$, the requester can calculate the corresponding garbled encrypted-output and evaluate GC to find out the actual output $\left \langle N\right \rangle$, which is the encrypted value of $N$. Once receiving $\left \langle N\right \rangle$, the pool decrypts it and  packages it in the block as accuracy.

It is worth noting that our model verification mechanism can also be utilized by full nodes to verify a model's accuracy. Therefore, the encrypted weights and biases need to  be packaged in the block with  $GC$. All of these encrypted data are denoted by $V_m$, which are stored in the block header as we mentioned in Section \ref{sec:flow}.

\section{Experimental Evaluation}
\label{sec:experiment}
In this section, we evaluate our proposed PoFL through  simulations based
on synthetic and real-world data as follows.
\subsection{Data trading}
In this subsection, we study the impact of some key parameters in our proposed data trading mechanism. Firstly, we study the impact of the pool's reputation $r$. Here we set $\varepsilon_1=\varepsilon_2=0.4$, $\eta=1.8$, $\alpha=\tilde{\alpha}=1.5$, $\beta=\tilde{\beta}=1$, and $Q=8$ to satisfy Theorems \ref{thm:Ds} and \ref{thm:m}\footnote{Other  parameters satisfying requirements are also evaluated, which presents similar performance trend. So we omit these results to avoid redundancy.}. As shown in Fig. \ref{IC_exp_1}(a), when $r$, the maximum utilities of both the pool and the data provider quadratically increase. The similar trend happens on   the probability that the pool can purchase training data as illustrated in Fig. \ref{IC_exp_1}(b). The reason behind these facts is that the higher $r$ will bring a higher data trading probability $p$ in light of \eqref{p} and a lower expected loss  brought by sensitive data leakage $c(r,V)$ according to \eqref{provider_benefit}, which together contribute to the increase of utilities for both sides.

Then we examine the impact of the pool's legally expected net profit $Q$, which is reported in Fig. \ref{IC_exp_1}(c)(d). Note that the parameter setting here is the same as the above ones except the varying $Q$ and $r=0.5$.
It can be observed from figures that both maximum utilities and the data trading probability  increase with $Q$. This is because the larger $Q$, the higher the data price that the pool is willing to provide, which not only increases  $p$ as defined in \eqref{p} but also improves the utilities for both sides due to \eqref{up} and \eqref{provider_benefit}.

\begin{figure}[h]
\subfigure[Utility vs. $r$]{
\centering
\includegraphics[width=0.22\textwidth]{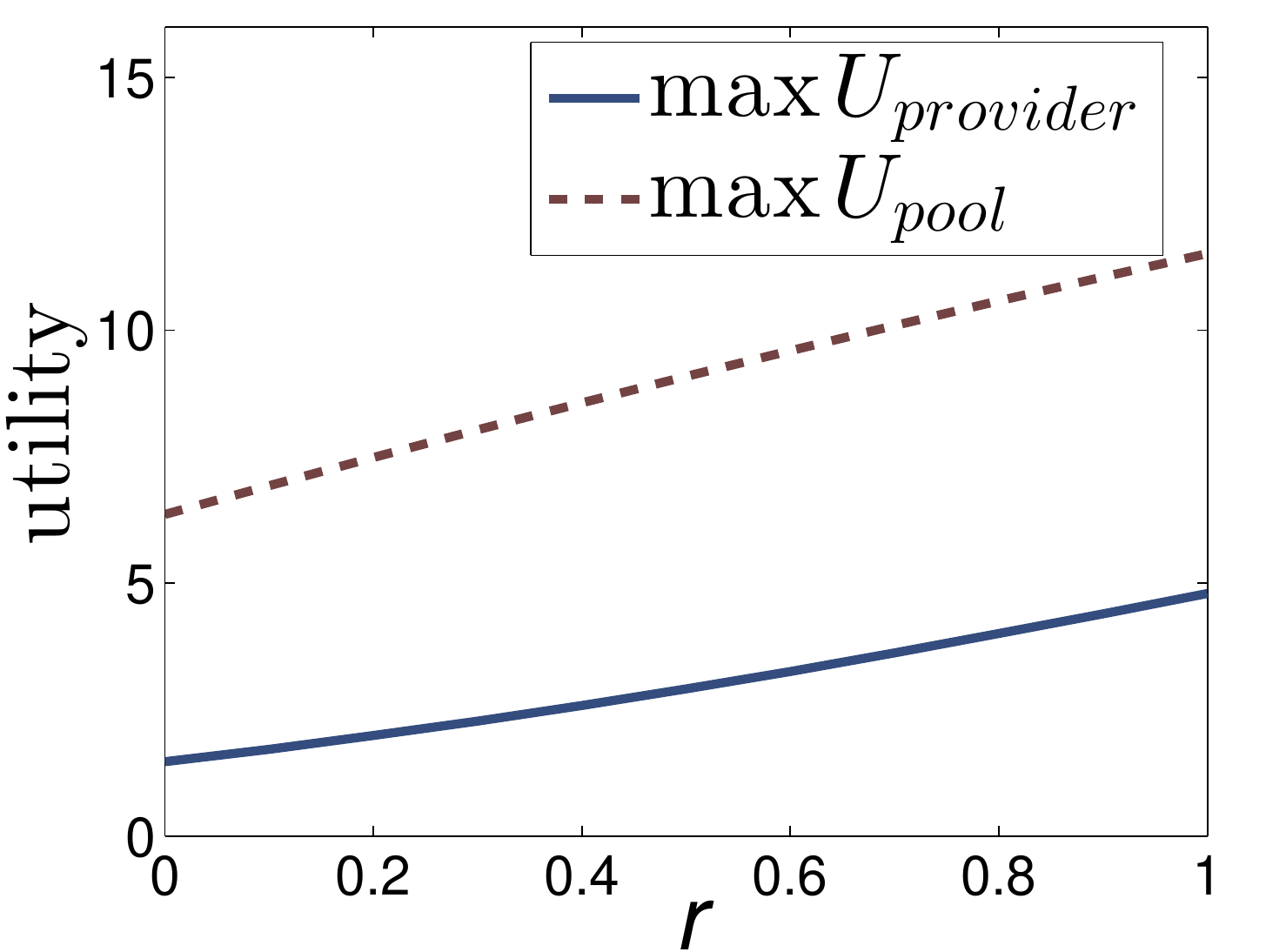}
}
\subfigure[Probability vs. $r$]{
\centering
\includegraphics[width=0.22\textwidth]{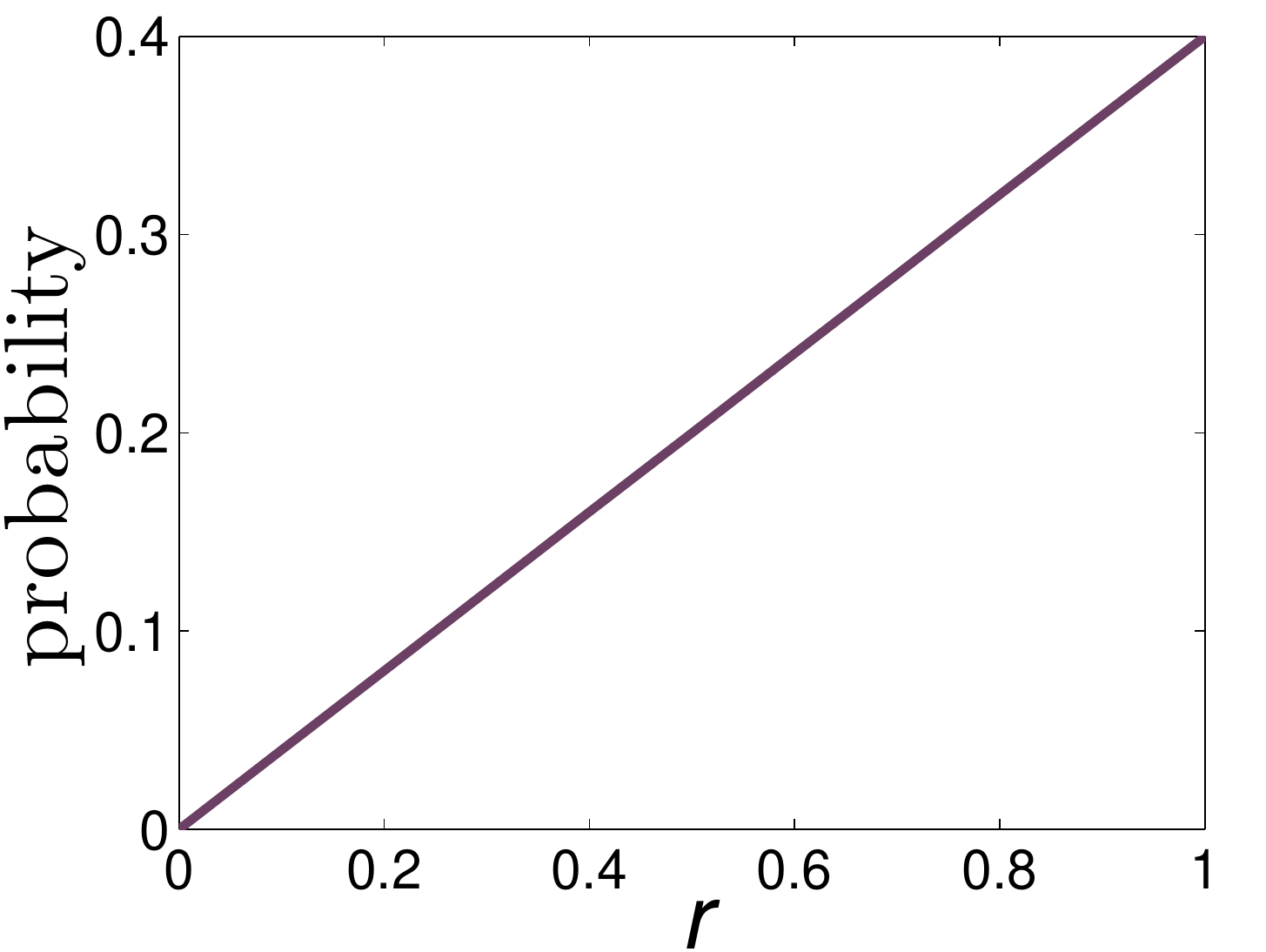}
}
\subfigure[Utility vs. $Q$]{
\centering
\includegraphics[width=0.22\textwidth]{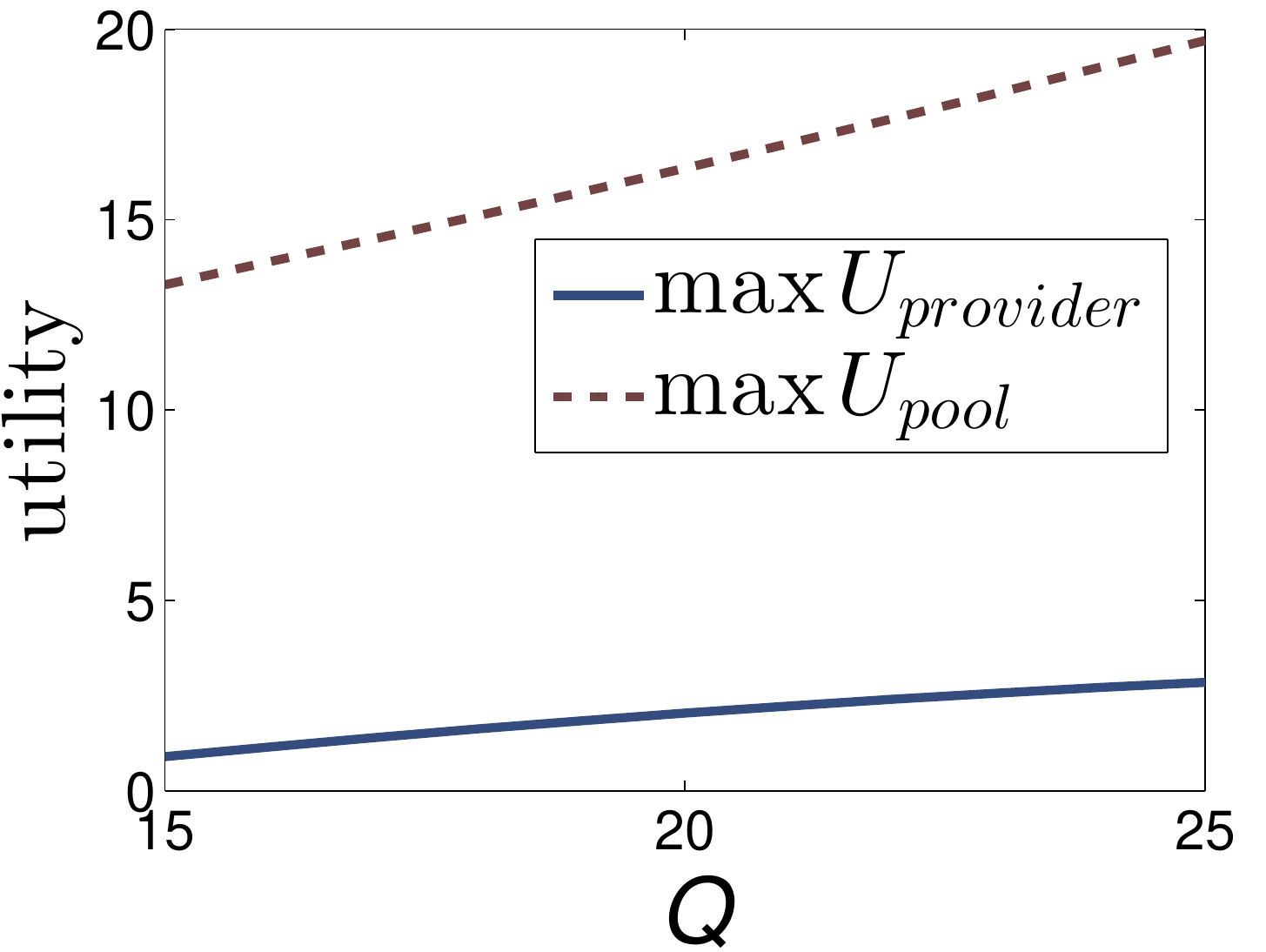}
}
\subfigure[Probability vs. $Q$]{
\centering
\includegraphics[width=0.22\textwidth]{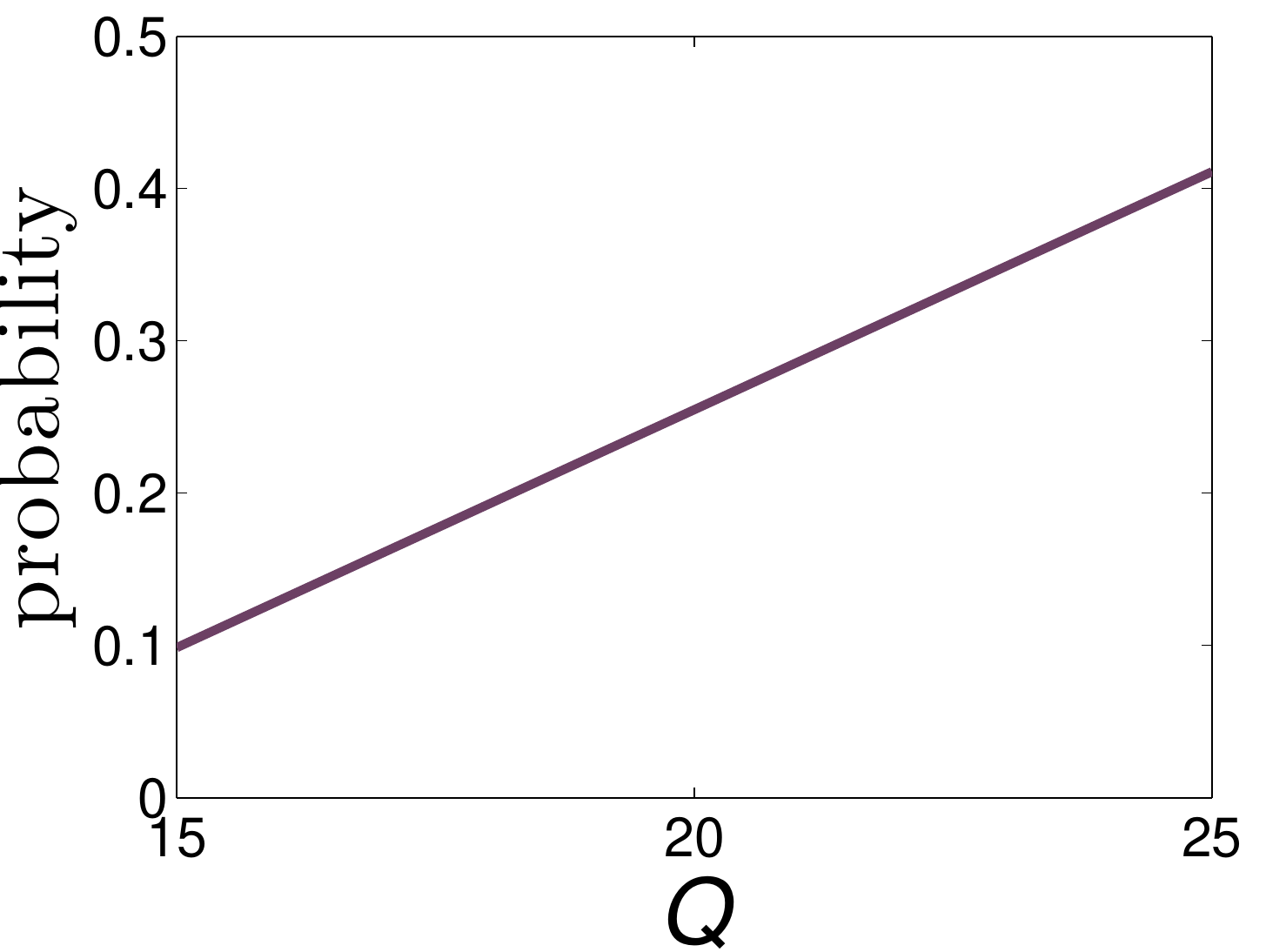}
}
\caption{Impacts of $r$ and $Q$ on data trading.}
\label{IC_exp_1}
\end{figure}

Next, we evaluate the impacts of $\tilde{\alpha}$ and $\tilde{\beta}$ on data trading, which directly reflect the expected profit of leaking private data, i.e., $\tilde{c}(r,V)$.
With the same $\varepsilon_1, ~\varepsilon_2$ mentioned above, we use $\eta=1.8,~\beta=\tilde{\beta}=1,~Q=20$ in Fig. \ref{IC_exp_2}(a)(b) and    $\eta=0.5,~\alpha=\tilde{\alpha}=5,~Q=20$ in Fig. \ref{IC_exp_2}(c)(d).
As shown in Fig. \ref{IC_exp_2}(a)(c), with the increase of $\tilde{\alpha}$ and $\tilde{\beta}$, which is equivalent to the increase of $\tilde{c}(r,V)$, the maximum utility of the pool increases due to his selling of sensitive data, while the utility of the data provider decreases because of her increasing loss brought by data leakage.
According to Fig. \ref{IC_exp_2}(b)(d), one can tell that as the extra profit that the pool can gain from leaking data increases, the probability he can obtain the data is decreasing significantly, which can act as a reverse driving force for the legal behavior of the pool.
\begin{figure}[h]
\subfigure[Utility vs. $\tilde{\alpha}$]{
\centering
\includegraphics[width=0.22\textwidth]{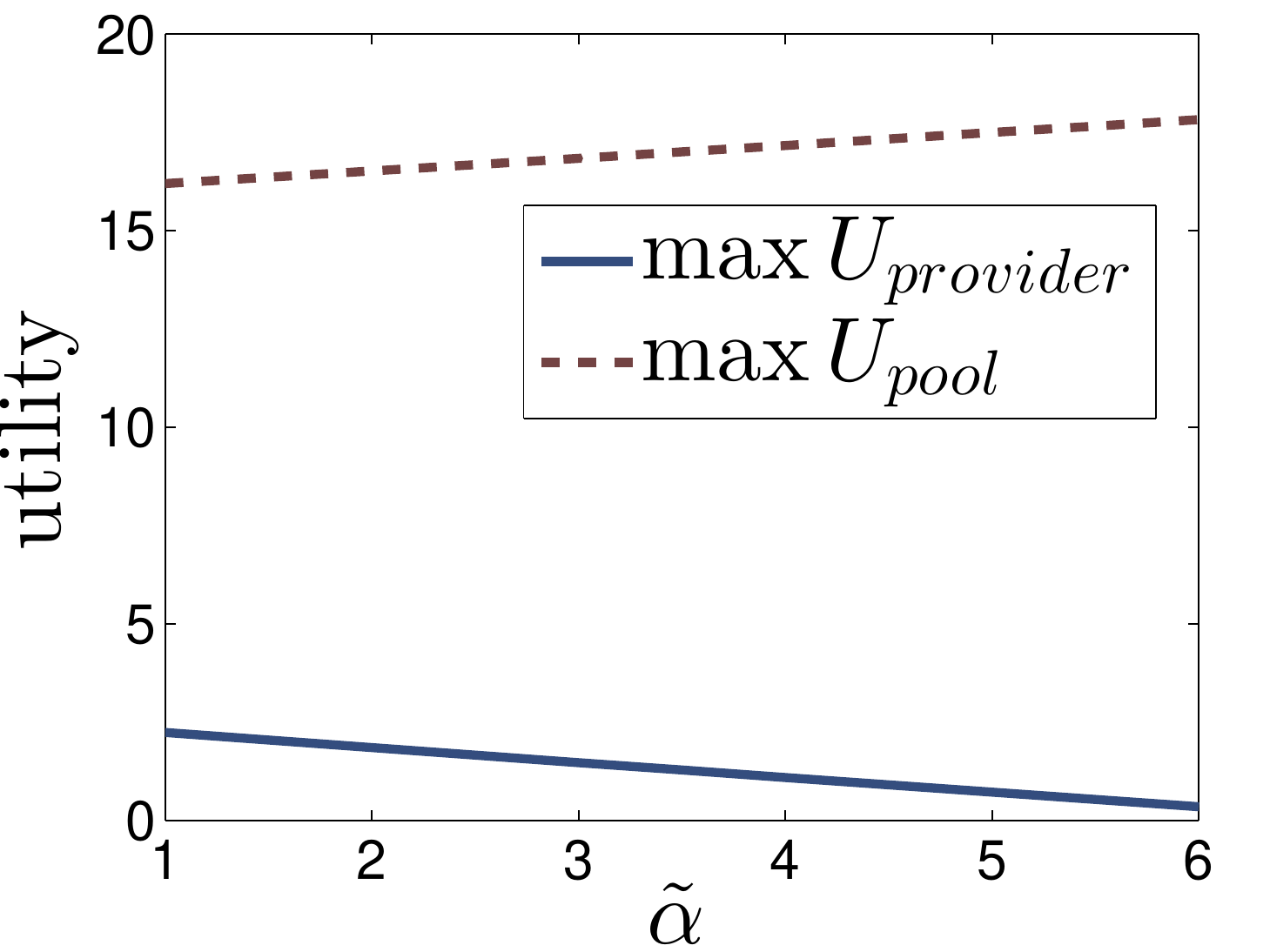}
}
\subfigure[Probability vs. $\tilde{\alpha}$]{
\centering
\includegraphics[width=0.22\textwidth]{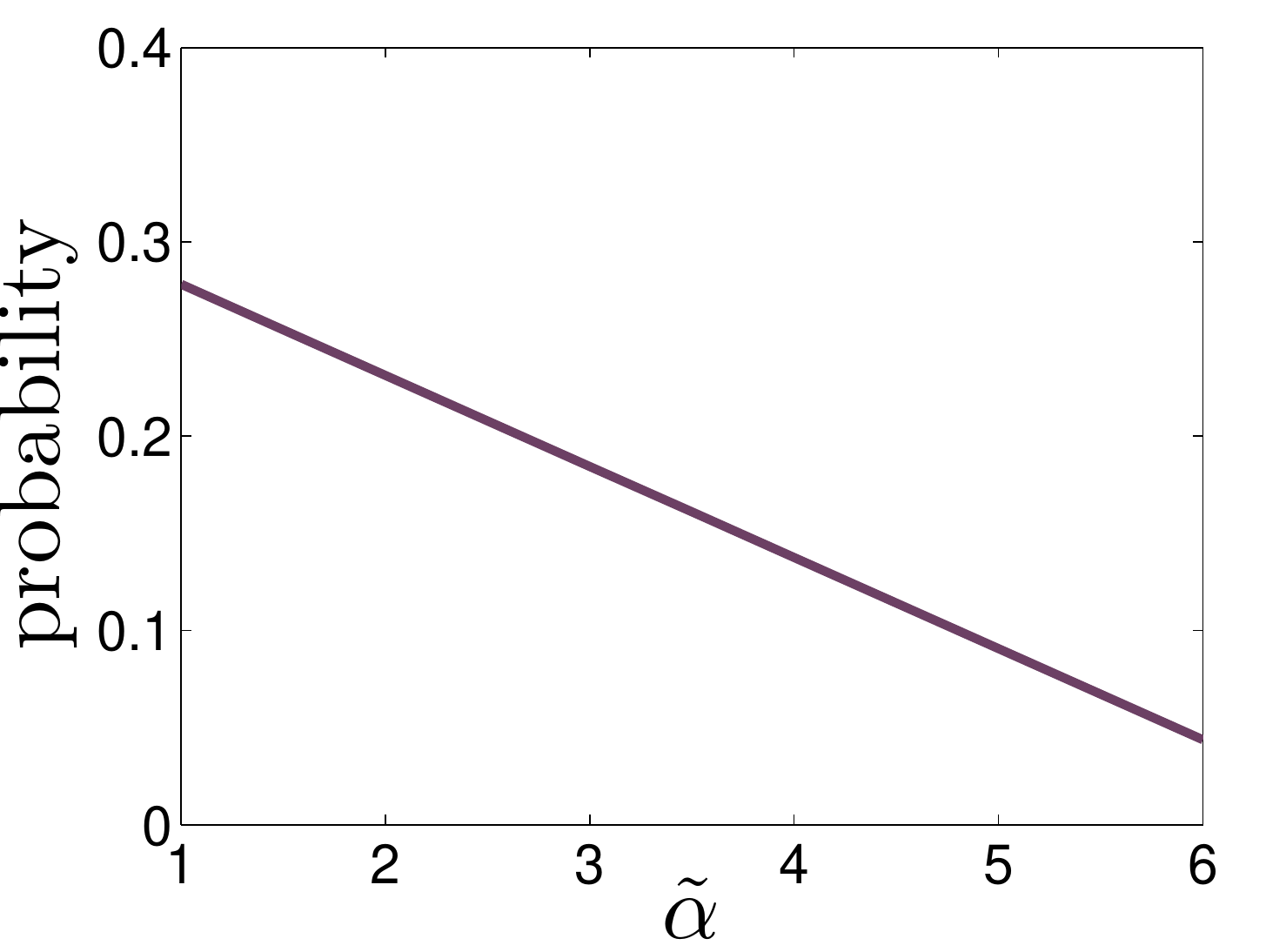}
}
\subfigure[Utility vs. $\tilde{\beta}$]{
\centering
\includegraphics[width=0.22\textwidth]{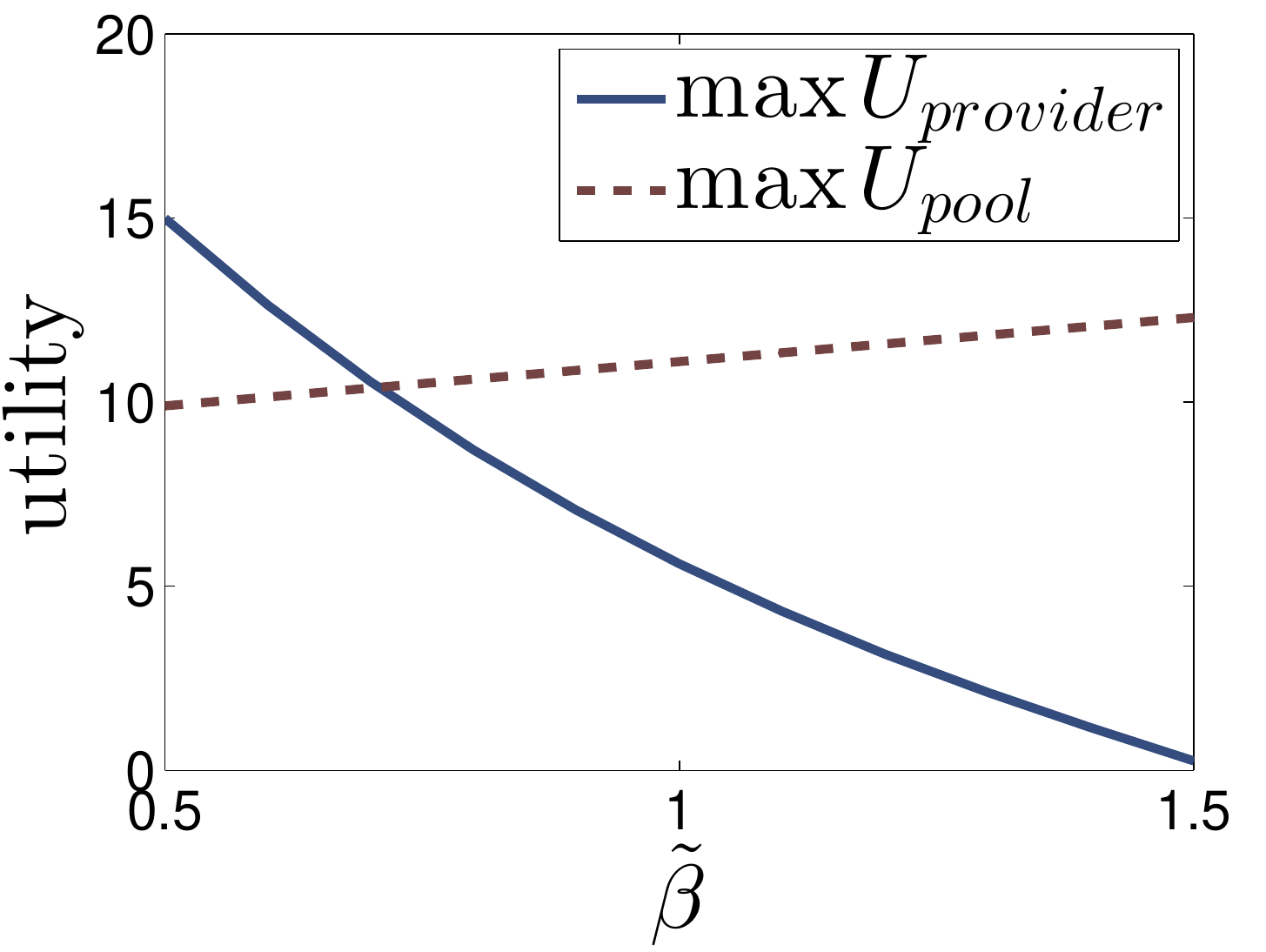}
}
\subfigure[Probability vs. $\tilde{\beta}$]{
\centering
\includegraphics[width=0.22\textwidth]{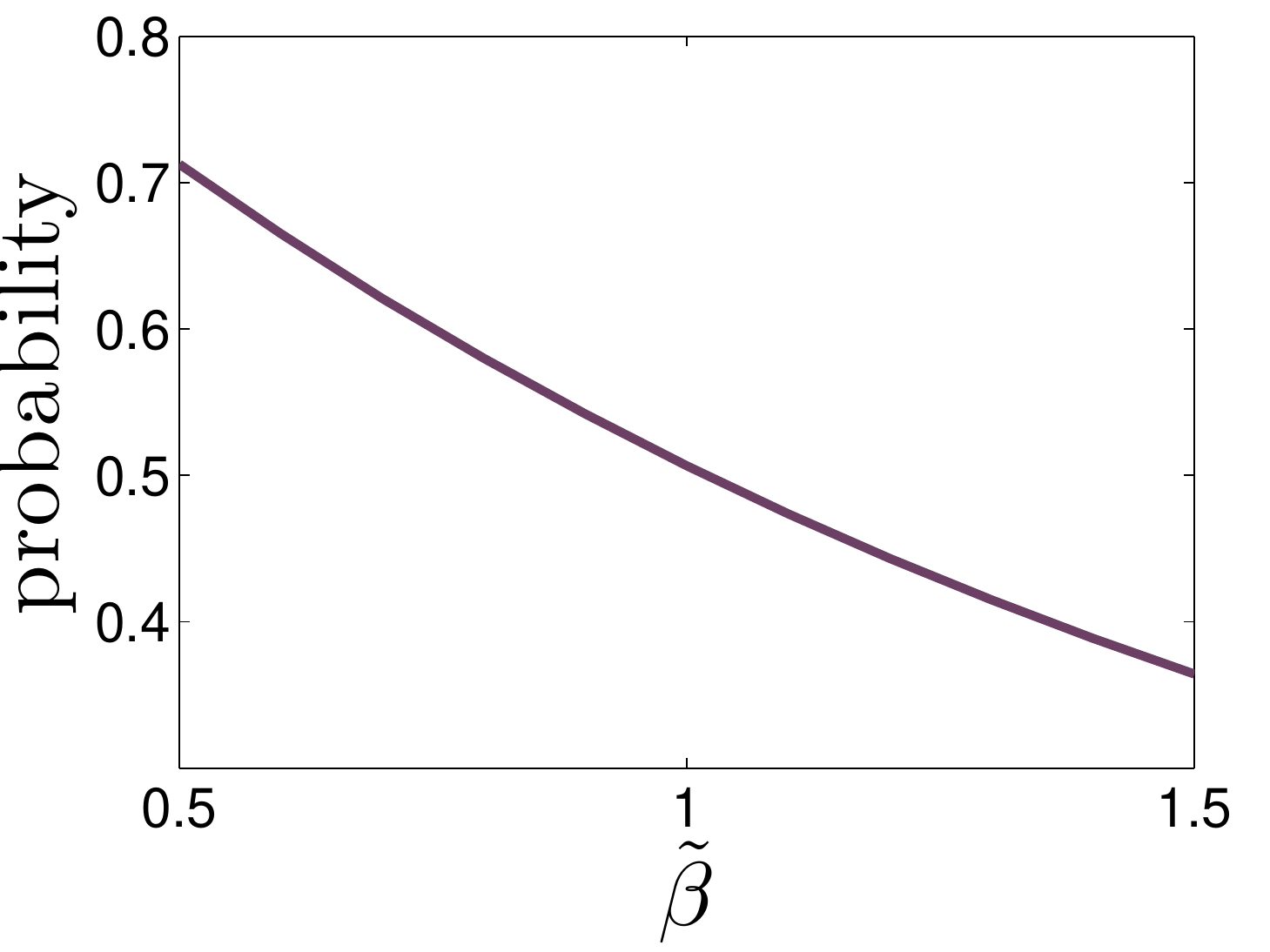}
}
\caption{Impacts of $\tilde{\alpha}$ and $\tilde{\beta}$ on data trading.}
\label{IC_exp_2}
\end{figure}

\subsection{Federated mining}
We simulate the federated mining process based on the CIFAR-10 dataset \cite{Krizhevsky09} with 50,000 training samples which is composed of 10 classes of $32 \times 32$ images with three RGB channels.
Here we set that the initial model selected by the pool manager is AlexNet \cite{Krizhevsky12}. Even though there are other models with higher accuracy like more than 96.53\% \cite{Benjamin14}, AlexNet can function well as an example in our proposed mechanism. In fact, how to select the initial model is out of the scope of our work.

To be specific, we take the linear regression training as an example\footnote{Other training functions can be implemented in a similar way.}. We use Batch Gradient Descent (BGD) to optimize the parameters of the model. Assuming that the total number of miners contributing to federated mining in the pool is $\mathcal{K}$ and the learning rate is $\zeta$.
Fig. \ref{lab1}(a) illustrates, when the learning rate $\zeta$ changes, the convergence speed to achieve certain accuracy is different but it is not the case where the higher the learning rate the better. 
This is because $\zeta$ indicates the length of step in the direction of gradient. If $\zeta$ is too small, the training process can be time-consuming; while if it is too large, it may cause excessive loss as the gradient obtained from the training data is an approximate value of the real one. 

Assuming that the whole dataset is evenly distributed among miners in the pool and each miner has the same proportion $S$ of the dataset, which is equivalent to $\frac{1}{\mathcal{K}}$. It turns out that the larger the portion of data each miner owns, the faster to reach the highest accuracy, shown in Fig. \ref{lab1}(b). This is because with more data, the calculated gradient will be closer to the real value for the fastest gradient descent.

\begin{figure}[ht]
\centering
\subfigure[Learning Rate]{
\begin{minipage}[t]{0.45\linewidth}
\centering
\includegraphics[width=1.0\textwidth]{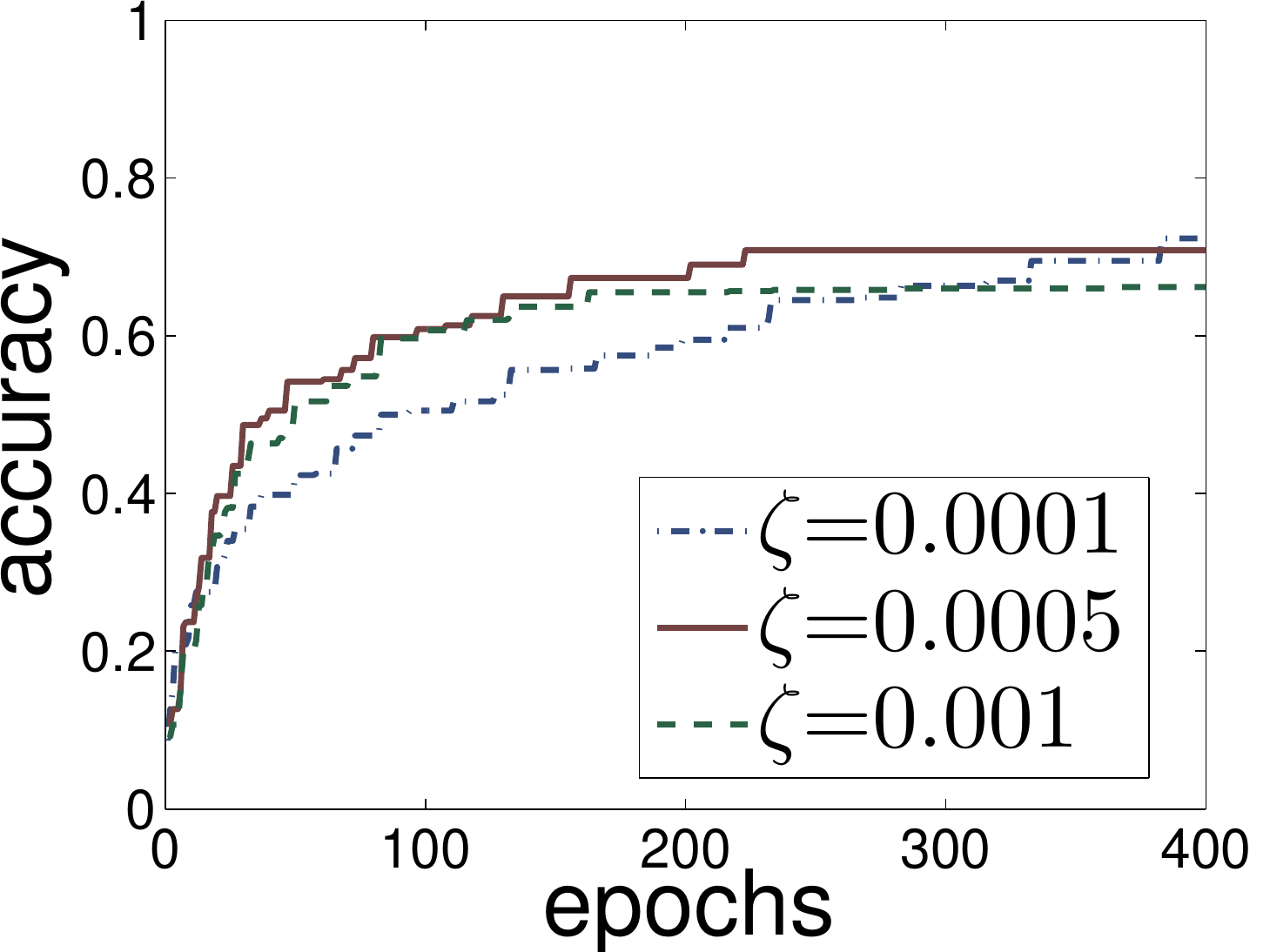}
\end{minipage}
}
\subfigure[Data Quantity]{
\begin{minipage}[t]{0.45\linewidth}
\centering
\includegraphics[width=1.0\textwidth]{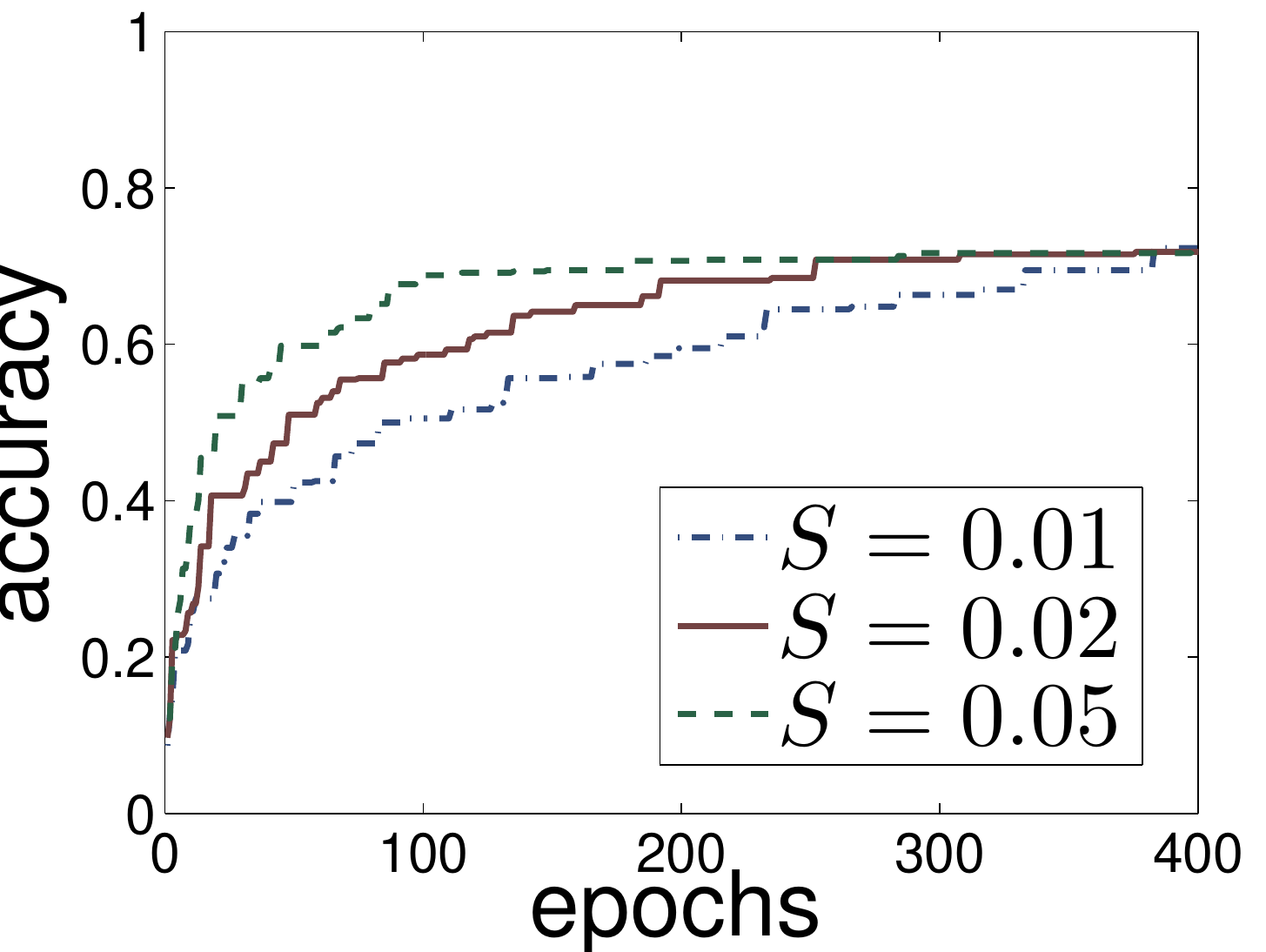}
\end{minipage}
}
\centering
\caption{Experiment results of federated mining.}
\label{lab1}
\end{figure}

\subsection{Accuracy calculation}
After federated mining, the model of each pool is translated and encrypted to calculate the accuracy based on the privacy-preserving model verification mechanism. We conduct the numerical analysis on both computation and communication cost of HE and $GC$ construction affected by data quantity.

We first study the communication cost between the pool manager and the requester, 
changing with data quantity. Assuming that there are four layers with 256, 16, 16, 10 nodes respectively in the model, the communication cost of HE and OT of $GC$ is shown in Fig. \ref{lab_2PC}(a). As we can see, the communication cost of HE is almost linear to the quantity of data. 
This is because the exchanged data are $\left \langle a_{ij}\right \rangle$, $\left \langle z_k+h_k\right \rangle$ and $z_k+h_k$, which are linearly increasing with data quantity.
In addition, the communication cost of OT in $GC$ is also increasing with data quantity since it is related to the number of garbled gates. According to 
Table \ref{tab3}, the number of garbled gates is changing with $I \times \left \lceil log(I) \right \rceil$, where $I$ is exactly the data quantity.

\begin{figure}[ht]
\centering
\subfigure[Communication Cost]{
\begin{minipage}[t]{0.45\linewidth}
\centering
\includegraphics[width=1.0\textwidth]{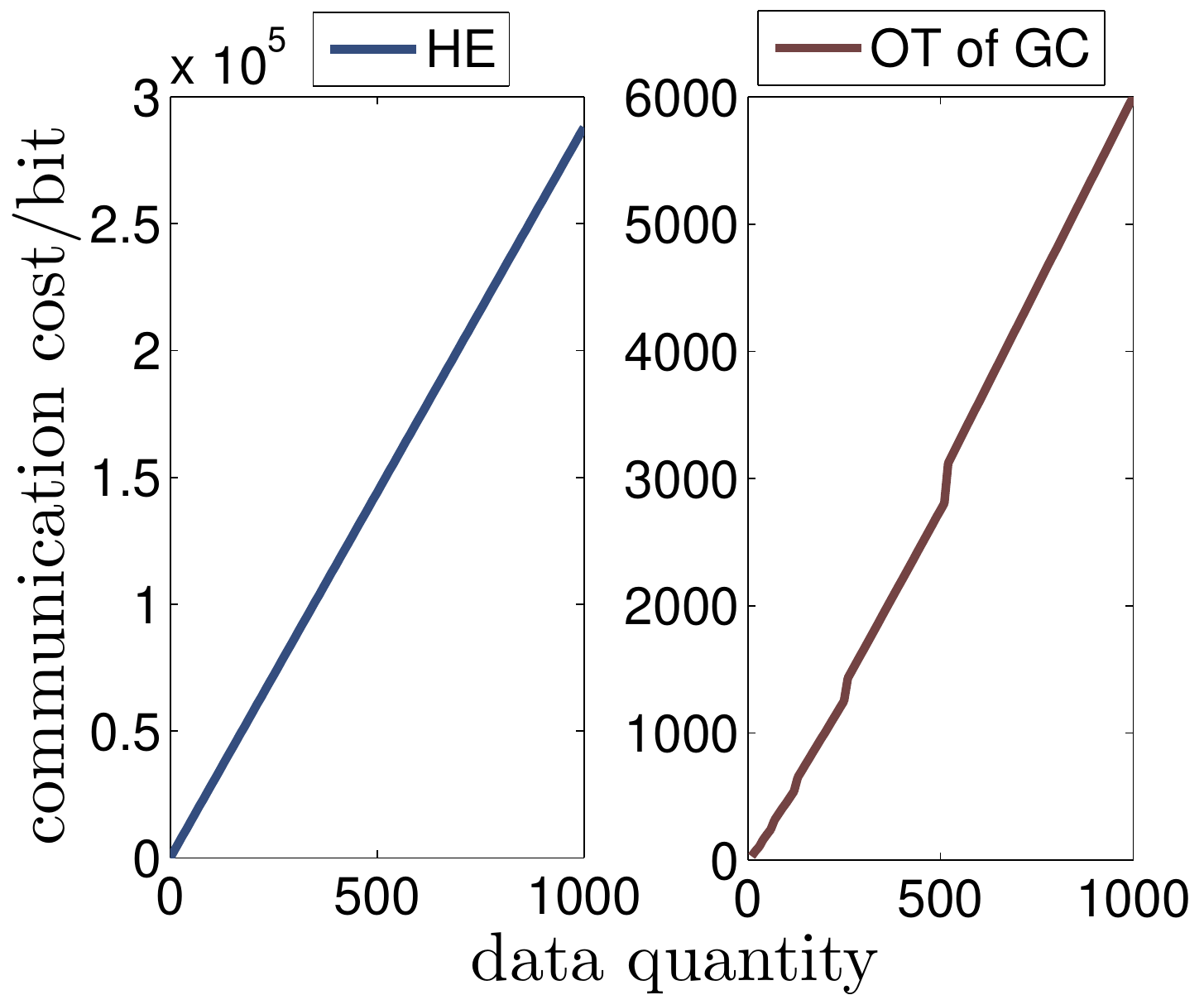}
\end{minipage}
}
\subfigure[Computation Cost]{
\begin{minipage}[t]{0.45\linewidth}
\centering
\includegraphics[width=1.0\textwidth]{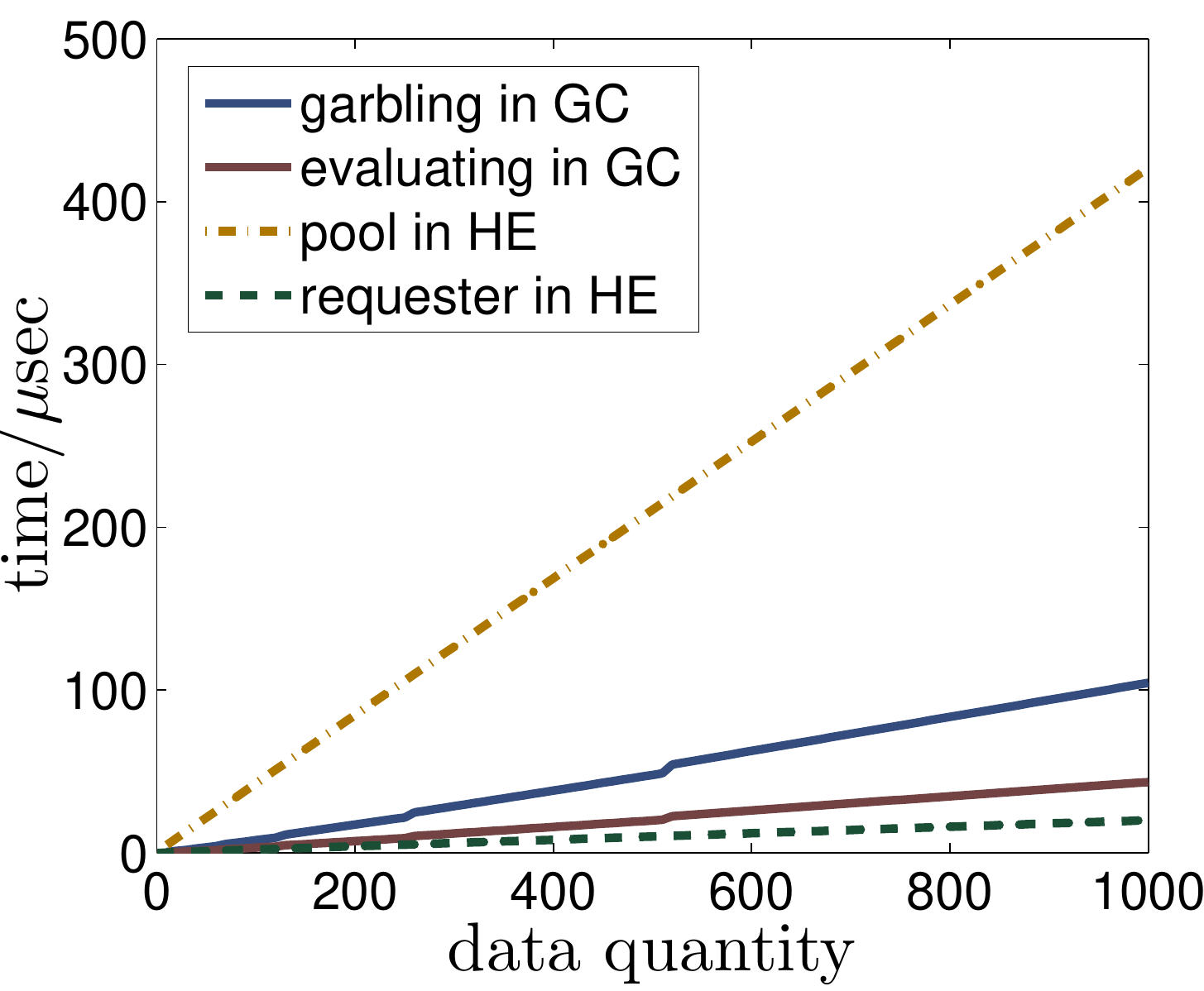}
\end{minipage}
}
\centering
\caption{Experimental results of the model verification mechanism.}
\label{lab_2PC}
\end{figure}

Then we present the computation costs of both HE and GC in Fig. \ref{lab_2PC}(b).
During HE, both the costs of the pool and the requester increase linearly with  data quantity  but
the cost of the pool increases faster than that of the requester. This is
because the requester can complete a lot of calculation offline, such as the encryption of the test data, then the online calculation only refers to the decryption of $\left \langle z_k+h_k\right \rangle$; while for the pool, he has to calculate a lot of intermediate results, such as $z_k$ and $h_k$, which costs him more compared to that of the requester when data quantity increases.
While the time cost of GC consists of two parts, garbling the circuit of the pool and evaluating GC of the requester. We use the best PRM-based garbling scheme of JustGarble system called GaX and our processor runs at 3.20 GHz \cite{Mihir13}. The consumption of GC construction is proportional to the number of non-free gates in the circuit, which is also changing with $I \times \left \lceil log(I) \right \rceil$.

\subsection{Accuracy verification}
When blocks are generated, it comes to the accuracy verification process of full nodes. The time for full nodes to verify accuracy and reach a consensus is very significant. The shorter the time, the better the efficiency and the higher the security.
In our proposed mechanism, the main time spent on verification is accuracy sorting and testing.
The time of sorting is based on  the operating speed and the number of blocks. For the operating speed, we take ASIC in \cite{Abdel17} as an example.  Referring to the average sorting time of ASIC, the time consumption increases with the number of models as shown in Fig. \ref{verification_time}(a). While for the number of received blocks, according to the statistics of Bitcoin \cite{BTC19}, there are 9,364 nodes at present.
Thus, the time of sorting all the accuracy of received blocks in our system will not exceed 60 microseconds for full nodes.

After sorting the accuracy of models in a descending order, full nodes test these models from the first one. If the model with the highest accuracy is verified to be true, there is no need to test other models. Otherwise, the second one will be tested until finding the first model with verified accuracy being the same as that stored in the block header. The quantity of our test data is set as 10,000. The average time for full nodes to test the accuracy of models is between 1.65 seconds and 1.7 seconds as shown in Fig. \ref{verification_time}(b). In addition to the time of sorting, all the time we need is no more than 2 seconds.

\begin{figure}[h]
\centering
\subfigure[Sorting]{
\begin{minipage}[t]{0.45\linewidth}
\centering
\includegraphics[width=1.0\textwidth]{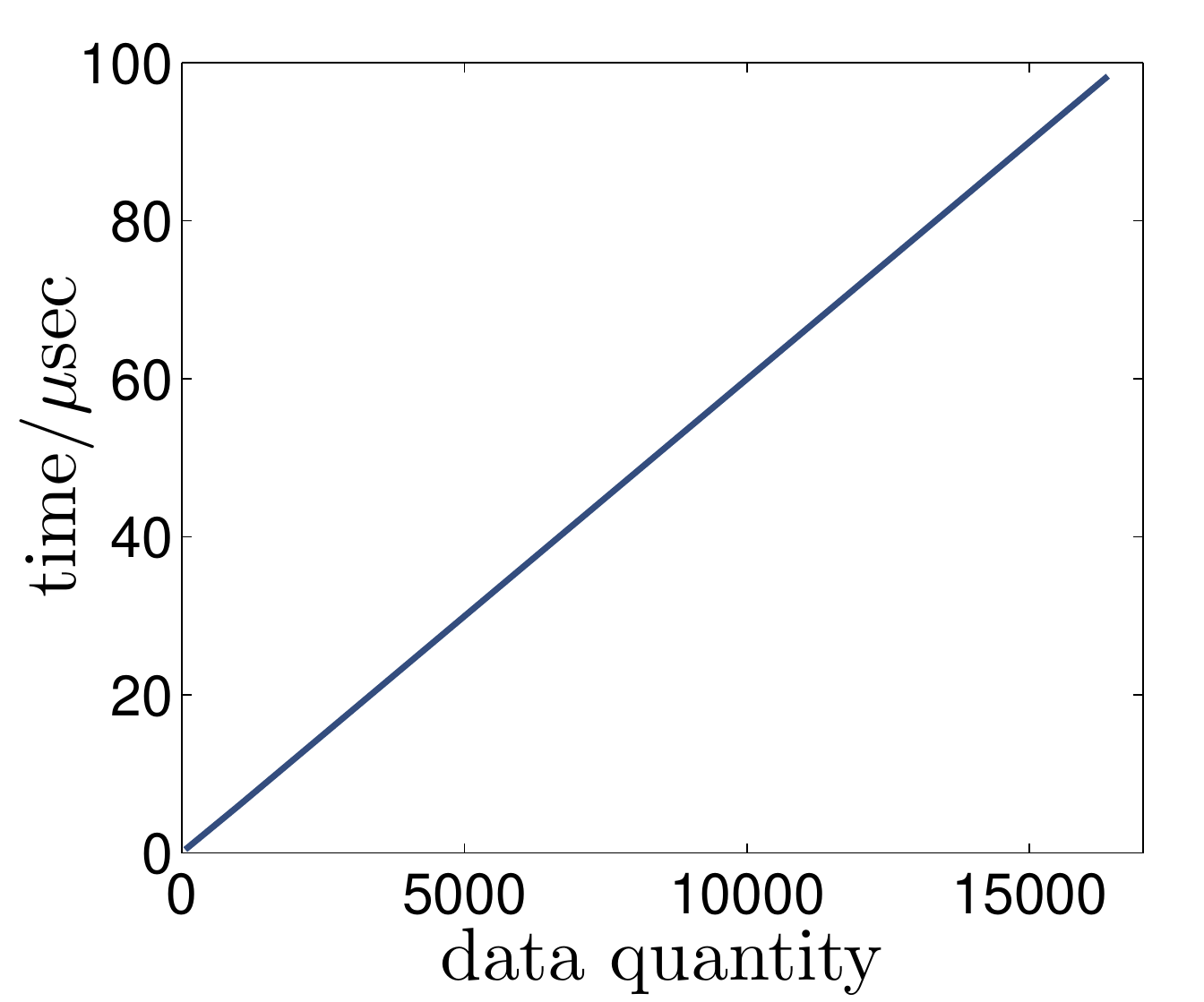}
\end{minipage}
}
\subfigure[Verification]{
\begin{minipage}[t]{0.45\linewidth}
\centering
\includegraphics[width=1.0\textwidth]{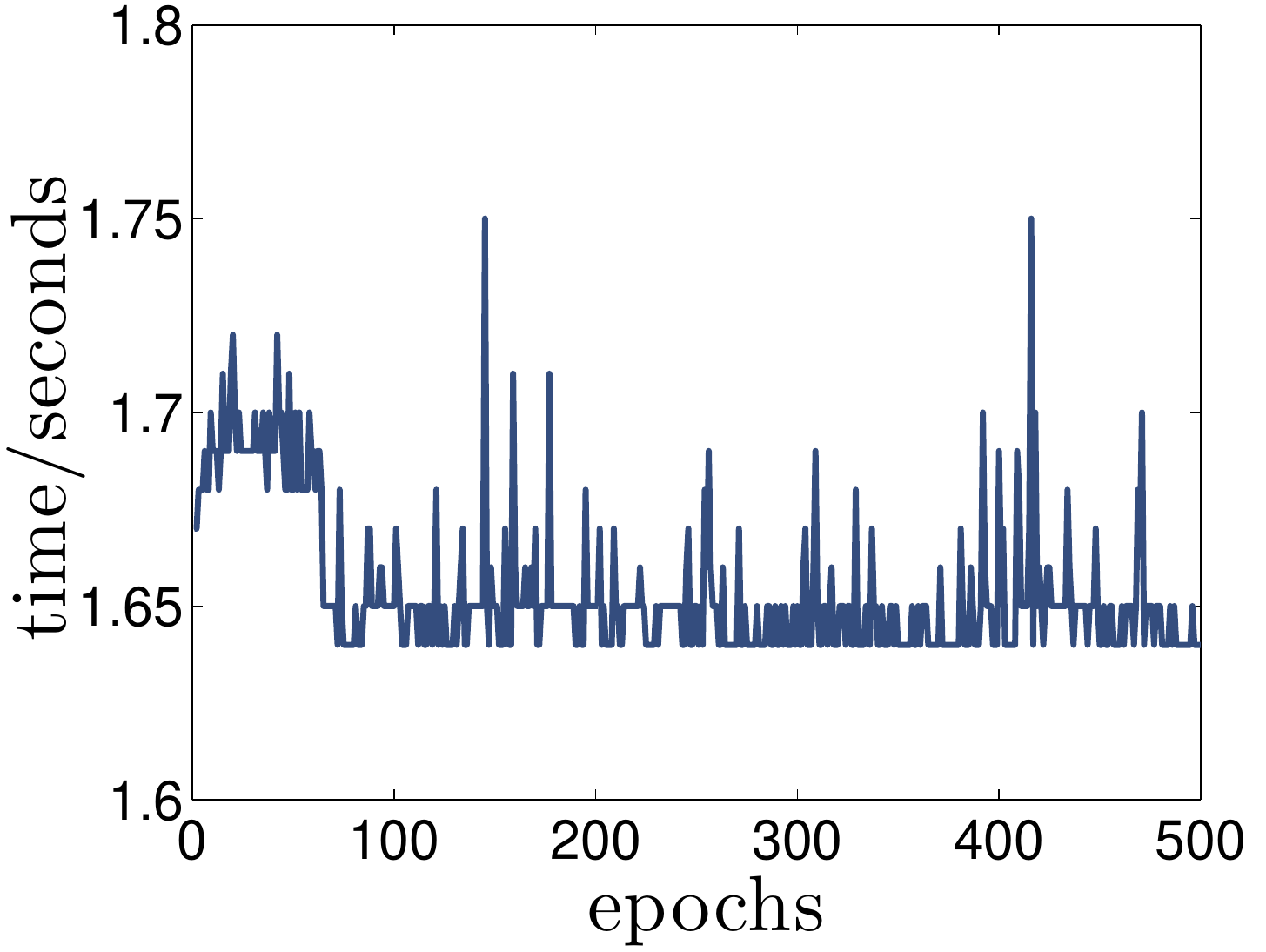}
\end{minipage}
}
\centering
\caption{Time of accuracy verification.}
\label{verification_time}
\end{figure}

\section{Related Work}
\label{sec:related}
Many attempts have been made to find valuable work as a substitute of puzzles in PoW, which should be difficult to solve but easy to verify. Initially, researchers take a small step forward replacing the nonce with some mathematic problems. Primecoin \cite{King13} requires miners to find long prime chains for the proof of work. There are many similar problems \cite{Ball17}, such as orthogonal vectors, all-pairs shortest paths problem and rSUM, which requires to find r numbers to have their sum be zero. However, there is no significant  practical value in solving these mathematical problems at the cost of ridiculously
large amounts of energy.

Next, more diverse and improved attempts are proposed. In PoX (proof of exercise) \cite{Shoker17}, scientific computation matrix-based problems are sent by employers to a third-party platform where miners select tasks to solve. Permacoin \cite{Miller14} proposes PoR (proof of retrievability) to investigate the storage space and memory for a file or file fragment, where the mining is not associated with computation but storage resources. PieceWork \cite{Daian17} reuses the wasted work for additional goals such as spam prevention and DoS mitigation by outsourcing.

In recent, the combination of deep learning and Blockchain has appeared. A substitution of PoW named PoDL (proof of deep learning) \cite{Chenli19} first uses deep learning for Blockchain maintenance instead of useless hash calculation, which only needs to add some new components to the block header and thus can be applied to the current cryptocurrency systems. However, the user needs to provide a complete training dataset for model training and test datasets for verification to all miners, which sacrifices data privacy and may frustrate the data provider to apply Blockchain. Besides, image segmentation is employed in \cite{Li19}, which defines the segmentation model as proof. Furthermore, Coin.AI \cite{Baldominos19} proposes a proof-of-storage scheme to reward users for providing storage for deep learning models.

\section{Conclusion}
\label{sec:conclusion}
In this paper, we propose a novel energy-recycling consensus algorithm named PoFL, where the cryptographic puzzles in PoW is replaced with federated learning tasks. To realize PoFL, a general  framework is introduced and  a new PoFL block structure is designed  for supporting block verification. To guarantee the privacy of  training data, a reverse game-based data trading mechanism is proposed,  which takes advantage
of market power to make a rational pool maximize his utility
only when he trains the model without any data leakage, thus further motivating pools to behave well. In addition, a privacy-preserving model verification mechanism is designed to verify the accuracy of a trained model while  preserving  the privacy of the task requester's test data as well as avoiding the pool's submitted model to be plagiarized by others, which employs homomorphic encryption and secure two-party computation in label prediction and comparison, respectively. Extensive simulations based on synthetic and real-world data demonstrate the effectiveness and efficiency of PoFL.

\bibliography{reference}

\begin{thebibliography}{10}
\providecommand{\url}[1]{#1}
\csname url@samestyle\endcsname
\providecommand{\newblock}{\relax}
\providecommand{\bibinfo}[2]{#2}
\providecommand{\BIBentrySTDinterwordspacing}{\spaceskip=0pt\relax}
\providecommand{\BIBentryALTinterwordstretchfactor}{4}
\providecommand{\BIBentryALTinterwordspacing}{\spaceskip=\fontdimen2\font plus
\BIBentryALTinterwordstretchfactor\fontdimen3\font minus
  \fontdimen4\font\relax}
\providecommand{\BIBforeignlanguage}[2]{{%
\expandafter\ifx\csname l@#1\endcsname\relax
\typeout{** WARNING: IEEEtran.bst: No hyphenation pattern has been}%
\typeout{** loaded for the language `#1'. Using the pattern for}%
\typeout{** the default language instead.}%
\else
\language=\csname l@#1\endcsname
\fi
#2}}
\providecommand{\BIBdecl}{\relax}
\BIBdecl

\bibitem{Digiconomist19}
Digiconomist, ``Bitcoin energy consumption index,'' 2019, [Online]. Available:
  \url{https://digiconomist.net/bitcoin-energy-consumption},.

\bibitem{King12}
S.~King and S.~Nadal, ``Ppcoin: Peer-to-peer crypto-currency with
  proof-of-stake,'' \emph{self-published paper, August}, vol.~19, 2012.

\bibitem{Nguyen18}
G.-T. Nguyen and K.~Kim, ``A survey about consensus algorithms used in
  blockchain.'' \emph{Journal of Information processing systems}, vol.~14,
  no.~1, 2018.

\bibitem{King13}
S.~King, ``Primecoin: Cryptocurrency with prime number proof-of-work,'' 2013.

\bibitem{Shoker17}
A.~Shoker, ``Sustainable blockchain through proof of exercise,'' in \emph{2017
  IEEE 16th International Symposium on Network Computing and Applications
  (NCA)}.\hskip 1em plus 0.5em minus 0.4em\relax IEEE, 2017, pp. 1--9.

\bibitem{Li19}
B.~Li, C.~Chenli, X.~Xu, T.~Jung, and Y.~Shi, ``Exploiting computation power of
  blockchain for biomedical image segmentation,'' in \emph{Proceedings of the
  IEEE Conference on Computer Vision and Pattern Recognition Workshops}, 2019,
  pp. 0--0.

\bibitem{Chenli19}
C.~Chenli, B.~Li, Y.~Shi, and T.~Jung, ``Energy-recycling blockchain with
  proof-of-deep-learning,'' \emph{arXiv preprint arXiv:1902.03912}, 2019.

\bibitem{Bonawitz19}
K.~Bonawitz, H.~Eichner, W.~Grieskamp, D.~Huba, A.~Ingerman, V.~Ivanov,
  C.~Kiddon, J.~Konecny, S.~Mazzocchi, H.~B. McMahan \emph{et~al.}, ``Towards
  federated learning at scale: System design,'' \emph{arXiv preprint
  arXiv:1902.01046}, 2019.

\bibitem{LeCun10}
Y.~LeCun and C.~Cortes, ``Mnist handwritten digit database,'' 2010, [Online].
  Available: \url{http://yann.lecun.com/exdb/mnist/}.

\bibitem{Gilad-Bachrach16}
R.~Gilad-Bachrach, N.~Dowlin, K.~Laine, K.~Lauter, M.~Naehrig, and J.~Wernsing,
  ``Cryptonets: Applying neural networks to encrypted data with high throughput
  and accuracy,'' in \emph{International Conference on Machine Learning}, 2016,
  pp. 201--210.

\bibitem{Xu19}
R.~Xu, J.~B. Joshi, and C.~Li, ``Cryptonn: Training neural networks over
  encrypted data,'' \emph{arXiv preprint arXiv:1904.07303}, 2019.

\bibitem{Rouhani18}
B.~D. Rouhani, M.~S. Riazi, and F.~Koushanfar, ``Deepsecure: Scalable
  provably-secure deep learning,'' in \emph{Proceedings of the 55th Annual
  Design Automation Conference}.\hskip 1em plus 0.5em minus 0.4em\relax ACM,
  2018, p.~2.

\bibitem{Mihir13}
M.~Bellare, V.~T. Hoang, S.~Keelveedhi, and P.~Rogaway, ``Efficient garbling
  from a fixed-key blockcipher,'' Cryptology ePrint Archive, Report 2013/426,
  2013, \url{https://eprint.iacr.org/2013/426}.

\bibitem{Kolesnikov08}
V.~Kolesnikov and T.~Schneider, ``Improved garbled circuit: Free xor gates and
  applications,'' in \emph{International Colloquium on Automata, Languages \&
  Programming}, 2008.

\bibitem{Pinkas09}
B.~Pinkas, T.~Schneider, N.~P. Smart, and S.~C. Williams, ``Secure two-party
  computation is practical,'' in \emph{International Conference on the Theory
  \& Application of Cryptology \& Information Security: Advances in
  Cryptology}, 2009.

\bibitem{Huang11}
Y.~Huang, C.-h. Shen, D.~Evans, J.~Katz, and A.~Shelat, ``Efficient secure
  computation with garbled circuits,'' in \emph{International Conference on
  Information Systems Security}.\hskip 1em plus 0.5em minus 0.4em\relax
  Springer, 2011, pp. 28--48.

\bibitem{Harnik08}
D.~Harnik, Y.~Ishai, E.~Kushilevitz, and J.~B. Nielsen, ``Ot-combiners via
  secure computation,'' in \emph{Conference on Theory of Cryptography}, 2008.

\bibitem{Stanislaw09}
S.~Jarecki and X.~Liu, ``Efficient oblivious pseudorandom function with
  applications to adaptive ot and secure computation of set intersection,'' in
  \emph{Theory of Cryptography Conference on Theory of Cryptography}, 2009.

\bibitem{Krizhevsky09}
A.~Krizhevsky, G.~Hinton \emph{et~al.}, ``Learning multiple layers of features
  from tiny images,'' Citeseer, Tech. Rep., 2009.

\bibitem{Krizhevsky12}
A.~Krizhevsky, I.~Sutskever, and G.~E. Hinton, ``Imagenet classification with
  deep convolutional neural networks,'' in \emph{Advances in neural information
  processing systems}, 2012, pp. 1097--1105.

\bibitem{Benjamin14}
\BIBentryALTinterwordspacing
B.~Graham, ``Fractional max-pooling,'' \emph{CoRR}, vol. abs/1412.6071, 2014.
  [Online]. Available: \url{http://arxiv.org/abs/1412.6071}
\BIBentrySTDinterwordspacing

\bibitem{Abdel17}
S.~Abdel-Hafeez and A.~Gordon-Ross, ``An efficient o ($ n $) comparison-free
  sorting algorithm,'' \emph{IEEE Transactions on Very Large Scale Integration
  (VLSI) Systems}, vol.~25, no.~6, pp. 1930--1942, 2017.

\bibitem{BTC19}
BTC, 2019, [Online]. Available: \url{https://bitnodes.earn.com/}.

\bibitem{Ball17}
M.~Ball, A.~Rosen, M.~Sabin, and P.~N. Vasudevan, ``Proofs of useful work.''
  \emph{IACR Cryptology ePrint Archive}, vol. 2017, p. 203, 2017.

\bibitem{Miller14}
A.~Miller, A.~Juels, E.~Shi, B.~Parno, and J.~Katz, ``Permacoin: Repurposing
  bitcoin work for data preservation,'' in \emph{Security \& Privacy}, 2014.

\bibitem{Daian17}
P.~Daian, I.~Eyal, A.~Juels, and E.~G. Sirer, ``(short paper) piecework:
  Generalized outsourcing control for proofs of work,'' in \emph{International
  Conference on Financial Cryptography and Data Security}.\hskip 1em plus 0.5em
  minus 0.4em\relax Springer, 2017, pp. 182--190.

\bibitem{Baldominos19}
A.~Baldominos and Y.~Saez, ``Coin.ai: A proof-of-useful-work scheme for
  blockchain-based distributed deep learning,'' \emph{arXiv preprint
  arXiv:1903.09800}, 2019.

\end{thebibliography}
\bibliographystyle{IEEEtran}

\end{document}